\newcommand{\define}{\triangleq}
\newcommand{\alg}{Directional Discrepancy }
\theoremstyle{theorem}
\newtheorem{thm}{{Theorem}}[]
\newtheorem*{thm*}{{Theorem}}
\newtheorem{lem}{Lemma}
\theoremstyle{remark}
\newtheorem{rem}{Remark}
\theoremstyle{definition}
\newtheorem{defn}{Definition}
\newtheorem{conj}{Conjecture}
\def\F{\mathcal{F}}
\newcommand{\mvec}[1]{{\boldsymbol #1}}
\newcommand{\normalvec}[1]{\Vec{\boldsymbol #1}}
\begin{document}

\title{A practical algorithm to calculate Cap Discrepancy}
\author[1]{Milad Bakhshizadeh}
\affil[1]{Columbia University \\ {mb4041@columbia.edu}}

\author[2]{Ali Kamalinejad}
\affil[2]{University Tehran \\ {kamalinejad.a@ut.ac.ir}}

\author[3]{Mina Latifi}
\affil[3]{Alzahra University \\ {minlat1375@gmail.com}}

\maketitle 
\begin{abstract}
Uniform distribution of the points has been of interest to researchers for a long time and has applications in different areas of Mathematics and Computer Science.  One of the well-known measures to evaluate the uniformity of a given distribution is Discrepancy, which assesses the difference between the Uniform distribution and the empirical distribution given by putting mass points at the points of the given set.  While Discrepancy is very useful to measure uniformity, it is computationally challenging to be calculated accurately.  
We introduce the concept of directed Discrepancy based on which we have developed an algorithm, called Directional Discrepancy, that can offer accurate approximation for the cap Discrepancy of a finite set distributed on the unit Sphere, $\mathds{S}^2.$  We also analyze the time complexity of the Directional Discrepancy algorithm precisely; and practically evaluate its capacity by calculating the Cap Discrepancy of a specific distribution, Polar Coordinates, which aims to distribute points uniformly on the Sphere.  
\end{abstract}

\section{Introduction}

\subsection{Uniform distribution on the Sphere}
The problem of uniformly distributing a finite set of points  on the Sphere has a long history and has applications in many areas of mathematics like Combinatorics and Numerical Analysis as well as in Finance, Computer Graphics, and Computational Physics \cite{Matousek}.
Defining a suitable measure of uniformity and finding optimal distribution with respect to a given measure are both challenging problems. Hence, there have been extensive efforts to address these problems which yield in several metrics to measure the distance from the Uniform distribution and many good structures to locate the points on the sphere  \cite{katanforoush2003, lubotzky1987hecke, aistleitner2012point}.  At the present work, we will focus on a specific measure of uniformity, namely Discrepancy, which evaluates how well the empirical mass points at the given set approximates the probability distribution on the surface $\mathds{S}^2$ induced by the Lebesgue measure.  We review the concept of Discrepancy in detail in the next section.

\subsection{Discrepancy}

This section aims to review the concept of Discrepancy and some of its applications.  Let $\mu$ be any finite measure on $\mathds{S}^2$, and $\mathcal{F}$ denotes a family of measurable subsets of the Sphere.  If the  distribution of a point set is relatively close to $\mu$ we expect that mass points at this set offer a good approximation of $\mu$, possibly after an appropriate rescaling.  Suppose $P$ denotes a point set of size $t$ on the Sphere.  The difference
$\envert{ \frac{\envert{C \cap P}}{t} - \frac{\mu(C)}{\mu(\mathds{S}^2)}}, \quad C \in \mathcal{F}$ shows how well this approximation is for the subset $C$.  This intuition leads one to the following measure of uniformity, called {Discrepancy}. 

\begin{defn} \label{def:Discrepancy}

Let $\F$ be a family of measurable sets, $\mu$ be a measure, and $P$ be a 
set of $t$ points on the Sphere.  The Discrepancy of $P$ with respect to $C \in \F$ is defined

\begin{equation} \label{eq:Discrepancy definition wrt one set}
 {\rm Dis}_C(P) \define \envert{ \frac{\envert{C \cap P}}{t} - \frac{\mu(C)}{\mu(\mathds{S}^2)}},
\end{equation}
and Discrepancy with respect to the family $\F$ is defined as
\begin{equation} \label{eq:Discrepancy wrt to famlily}
{\rm Dis}_{\F}(P) \define \sup_{C \in \mathcal{F}} Dis_C(P).
\end{equation}
\end{defn}

One of the most common choices for $\F$ is the family of caps (subsets of the Sphere cut by half spaces).  Also, the Spherical measure is a natural choice for $\mu$.  When we are in this setting we denote ${\rm Dis}_{\F}$ by a simpler notation ${\rm Dis}$. For the rest of the paper, $\F$ denotes the caps family and $\mu$ is the Spherical measure.

\begin{figure}[h]
    \centering
    \includegraphics{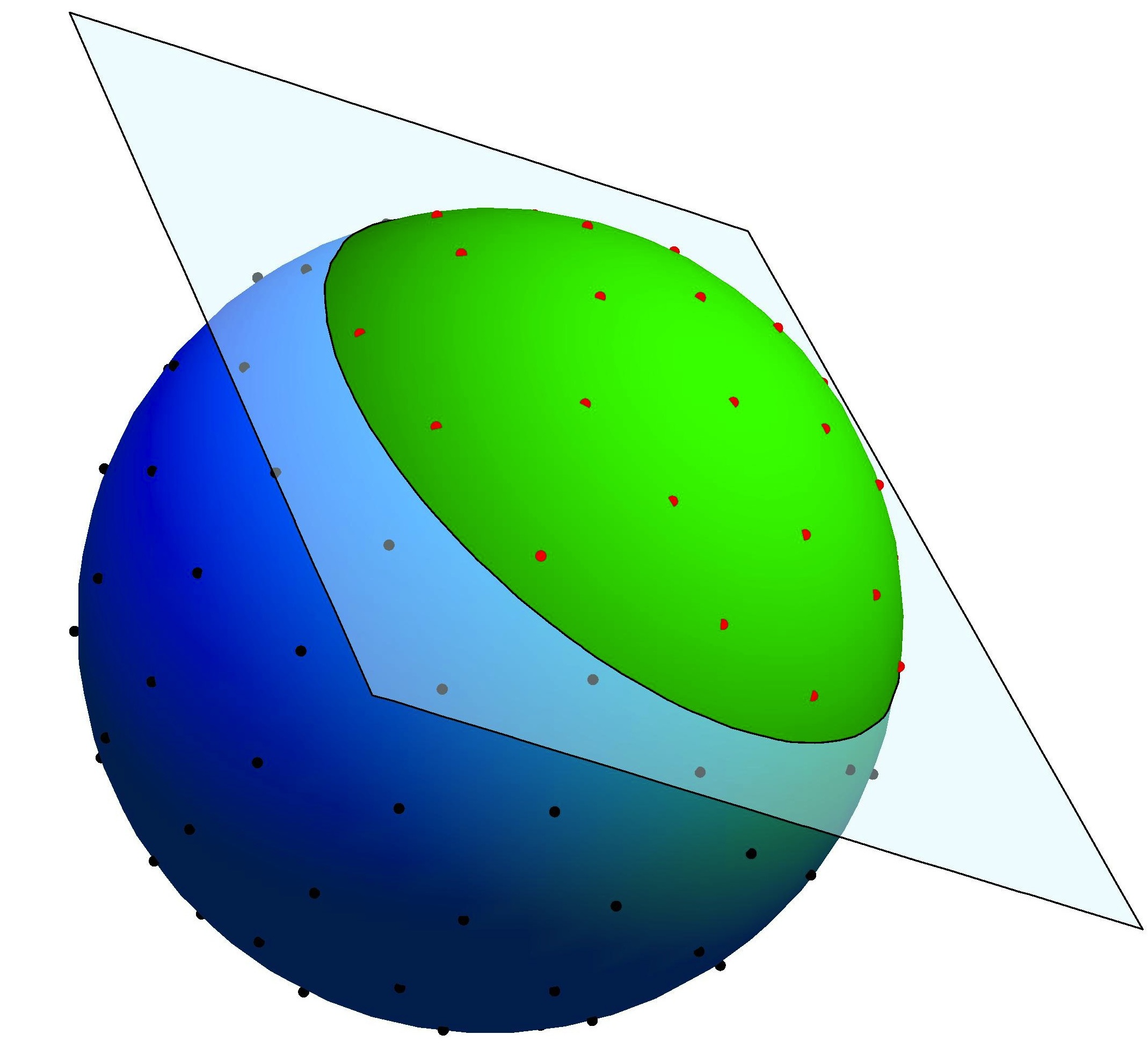}
    \caption{Discrepancy of a point set with respect to a cap}
    \label{fig:cap discrepancy}
\end{figure}

The study of point sets with low Discrepancy, i.e. almost uniformly distributed, has applications in a variety of fields of study including, but not limited to, Computer Science \cite{matouvsek1996derandomization}, Image Processing 
\cite{nair1999image}, Ramsey Theory \cite{t1983irregularities}, Number Theory \cite{beck1991flat}, and Numerical Analysis. In particular, the Koksma–Hlawka inequality states how one can bound the error of numerical integration of a function in terms of the Discrepancy of evaluation points and the total variation of the integrant \cite{Matousek}.

\begin{thm*}[Koksma–Hlawka Inequality]
If $f : [0, 1]^d \to \mathds{R}$ and $P \subset [0, 1]^d$ be a set with $t$ points, we have
\begin{equation*} 
    \abs{ \int_{\mathds{R}} f(x) dx - \frac{1}{t} \sum_{p \in P} f(p) } \leq \frac{1}{t} Dis(P) V(f),
\end{equation*}
where $Dis(P)$ denotes the Discrepancy of the set $P$ which is defined in a similar way for $[0, 1]^d$, and $V(f)$ is the total variation of $f$.
\end{thm*}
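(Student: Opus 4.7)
The plan is to prove the inequality via Riemann--Stieltjes integration by parts, first in dimension one and then by iterating coordinate by coordinate for general $d$. The key object is the \emph{local discrepancy function}
\begin{equation*}
\Delta_P(x) = \frac{|P \cap [0,x]|}{t} - x_1 x_2 \cdots x_d,
\end{equation*}
defined for $x = (x_1,\ldots,x_d) \in [0,1]^d$, where $[0,x]$ denotes the axis-aligned anchored box with opposite corners $0$ and $x$. By construction $\|\Delta_P\|_\infty \le \mathrm{Dis}(P)$, because anchored boxes form a subfamily of the sets over which the Discrepancy on $[0,1]^d$ is taken.

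In the one-dimensional case, I would express the integration error as a Riemann--Stieltjes integral,
\begin{equation*}
\int_0^1 f(x)\,dx - \frac{1}{t}\sum_{p \in P} f(p) = \int_0^1 f(x)\,d\bigl(F(x) - F_P(x)\bigr) = \int_0^1 \Delta_P(x)\,df(x),
\end{equation*}
where $F(x) = x$ and $F_P(x) = \frac{1}{t}|P \cap [0,x]|$, with the last equality coming from integration by parts and the boundary conditions $\Delta_P(0) = \Delta_P(1) = 0$ (under the convention that $F_P$ is right-continuous and $P \subset [0,1]$). Taking absolute values and applying the standard estimate $\bigl|\int g\,df\bigr| \le \|g\|_\infty V(f)$ for functions of bounded variation then yields the claim.

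For general $d$, I would iterate the one-dimensional identity in each coordinate. Repeated integration by parts produces the Zaremba/Hardy--Krause expansion
\begin{equation*}
\frac{1}{t}\sum_{p \in P} f(p) - \int_{[0,1]^d} f(x)\,dx = \sum_{\emptyset \neq u \subseteq \{1,\ldots,d\}} (-1)^{|u|+1} \int_{[0,1]^{|u|}} \Delta_{P,u}(y)\, d^{|u|} f_u(y),
\end{equation*}
where $f_u$ denotes $f$ with the coordinates outside $u$ set to $1$ and $\Delta_{P,u}$ is the corresponding marginal local discrepancy on the face indexed by $u$. Each of the $2^d - 1$ terms is bounded in absolute value by $\mathrm{Dis}(P)$ times the partial variation of $f$ along the coordinates in $u$, and summing these partial variations recovers the Hardy--Krause total variation $V(f)$.

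The main obstacle is the multidimensional bookkeeping: after repeated integration by parts one must carefully track which boundary terms vanish (those involving a coordinate set to $0$, where $\Delta_{P,u}$ is identically zero) and which survive as lower-dimensional Stieltjes integrals over the faces of the cube. Matching the resulting sum of mixed-derivative integrals with the Hardy--Krause definition of $V(f)$ assumed in the statement is where the bulk of the technical effort lies; the one-dimensional identity and the pointwise estimate $\|\Delta_P\|_\infty \le \mathrm{Dis}(P)$ are comparatively routine.
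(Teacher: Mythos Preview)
The paper does not prove the Koksma--Hlawka inequality at all: it is stated there as an unnumbered theorem in the introduction, attributed to the literature via the citation to Matou\v{s}ek, and no argument is given. There is therefore nothing in the paper to compare your proposal against.

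That said, your outline is the standard route to the inequality. The one-dimensional identity via Riemann--Stieltjes integration by parts against the local discrepancy function, followed by iteration over the coordinates to obtain the Zaremba expansion and bounding each face term by $\mathrm{Dis}(P)$ times the corresponding partial Hardy--Krause variation, is exactly the classical proof. Your description of which boundary terms vanish and which survive on the faces is accurate, as is your remark that the bookkeeping in matching the sum of mixed partial variations to $V(f)$ is where the effort lies. One small caveat: the statement as printed in the paper carries an extra factor of $1/t$ on the right-hand side that is inconsistent with the paper's own normalized definition of discrepancy in Definition~\ref{def:Discrepancy}; your argument, carried through correctly, yields $\mathrm{Dis}(P)\,V(f)$ without that factor, so do not try to reproduce it.
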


\subsection{Related works}

Discrepancy and structures of finite points with small Discrepancies have been appeared in different researches.  \cite{beck-1984} proves, for $\mathds{S}^d \subset \mathds{R}^{d + 1}$, when the size of a point set is set to be $t$, the Discrepancy is lower bounded by $c t^{- \frac{1}{2} - \frac{1}{2 d}}$, regardless of the distribution of the points,  where $c$ is a universal constant.  In the particular case of two-dimensional sphere, \cite{beck-1984} states that Discrepancy of any point set with size $t$ is $O(t^{-\frac{3}{4}})$.  There are random structures whose Discrepancy is off from this lower bound only by a logarithmic factor \cite{Matousek}.
However, deterministic distributions have been seen to have higher Discrepancy.  \cite{lubotzky1987hecke} introduces a structure based on the action of the symmetric group on the Sphere with cap Discrepancy of order $t^{-\frac{1}{3}}$.  \cite{aistleitner2012point} constructs a structure by using the Fibonacci sequence which can reach the Discrepancy of order $t^{- \frac{1}{2}}$.  \cite{katanforoush2003} investigates other measures of uniformity based on the Coulomb Potential metric for some point structures on the Sphere.  It also introduces a structure, called Polar Coordinates, and shows this structure is pretty uniform with respect to the measures they studied.  This result inspired us to examine Polar Coordinates with respect to Discrepancy as well and it turns out one can get low Discrepancy configuration, by slightly modifying Polar Coordinates.  Details of this structure and our investigations will be discussed in section \ref{sec:an application}.
Due to the best of our knowledge no deterministic structure of points has been known to have Discrepancy lower than $O(t^{- \frac{1}{2}})$ which seems to be the order of the Discrepancy for both Polar Coordinates and Fibonacci distribution introduced at \cite{aistleitner2012point}.

\subsection{Summary of our contribution}

The Discrepancy is an intuitive measure in the sense that it assigns large values to the point sets who have a high concentration of points on a cap or leave a sparse cap on the sphere. Obtaining the exact value of the Discrepancy for a finite set, however, is computationally challenging. This is because one needs to take supremum over an infinite family of caps to obtain the Discrepancy.  It is not hard to check one only needs to check finitely many caps, those whose boundary passes through a few points from the given set.  Nevertheless, the number of candidate caps grows like $t^3$, where $t$ as above is the size of the point set.  Given calculating the Discrepancy for a fixed cap has a time complexity of order $O(t)$, the naive method of checking all  candidate caps will run in $O(t^4)$, which is a prohibitively large time complexity.  

We defined the notion of {directed Discrepancy} based on which we proposed an efficient algorithm, called {Directional Discrepancy}, to approximate the Discrepancy of a given point set.  The Time complexity of the Directional Discrepancy algorithm depends on how uniformly the point set is distributed as well as the level of acceptable error for the approximation.  For the point sets that are expected to have low Discrepancy, the Directional Discrepancy algorithm can run in linear time complexity, i.e. $O(t)$, to find an approximation with a fixed additive error.  If one wishes to obtain small relative errors for Discrepancy, the algorithm's run time growth is upper bounded by $O(t^{\frac{5}{2}})$.  Moreover, for the known deterministic distributions, whose Discrepancy is of order $O(t^{\frac{-1}{2}})$, we proved the \alg has time complexity upper bounded by $O(t^2)$ to offer an approximation with a relatively  small deviation from the Discrepancy.  Apart from reducing the time complexity of the calculation of Discrepancy, our proposed algorithm can offer informative data for a given distribution, like the direction at which the maximum Discrepancy occurs and local behavior of directed Discrepancy all over the Sphere.

To evaluate the performance of the Directional Discrepancy algorithm practically we applied it to calculate the Discrepancy of a specific distribution called Polar Coordinates, which was introduced in \cite{katanforoush2003}.   It turns out slightly modification of Polar Coordinates yields one of the best known deterministic structures to minimize Discrepancy.  Moreover, our experimental investigations lead us to a conjecture, for which we have enough experimental evidence, that reveals an important structural feature of Polar Coordinates.  If we assume the conjecture holds, we can provably upper bound the Discrepancy of the Polar Coordinates with $O(t^{-\frac{1}{2}})$ which is the smallest known order of Discrepancy for a deterministic distribution.

We believe Directional Discrepancy can be used in a similar way to offer informative data about other finite structures as well.
The implementation of the proposed approximation and the run results for Polar Coordinates can be found at: \href{https://github.com/minalatifi/DirectionalDiscrepancy/tree/main/Results}{https://github.com/minalatifi/DirectionalDiscrepancy/tree/main/Results}.
Due to the best of our knowledge, Directional Discrepancy is the first algorithm proposed to approximate cap Discrepancy in a timely manner.

\subsection{Organization of the paper}
The rest of the paper is organized as follows. Section \ref{sec: directional Discrepancy} introduces the directed Discrepancy as well as the \alg algorithm, and discusses the theoretical properties of these two.  Section \ref{sec:an application} reports the results we obtained by using the \alg algorithm on Polar Coordinates distribution which is followed by the conclusion section \ref{sec: conclusion}.

\section{Directional Discrepancy Algorithm} \label{sec: directional Discrepancy}

In this section, we introduce the notion of the directed Discrepancy as well as \alg algorithm.  Moreover, we state and prove all the requirements for the proof of the correctness of the \alg algorithm. 

\subsection{Background}

Let us start with the background needed to introduce the \alg algorithm.  In the following, we define directed Discrepancy and study its properties.

\begin{defn} \label{def:Directed Discrepancy}
Given a normal vector $\normalvec{v}$, {directed Discrepancy} in direction $\normalvec{v}$ is Discrepancy of the subfamily whose caps' normal vector is $\normalvec{v}$.  In other words, if we let
$\mathcal{F}_{\normalvec{v}} = \cbr{C \in \mathcal{F} \sVert[2] C = \cbr{\mvec{x} \in \mathds{R}^3  \sVert[2] \langle \mvec{x},\normalvec{v} \rangle \geq h}, h \in \mathds{R}} $,
then directed Discrepancy is ${\rm Dis}_{\F_{\normalvec{v}}}(A)$, defined in the equation \eqref{eq:Discrepancy wrt to famlily}.  For simplicity of the notation, we will use ${\rm Dis}_{\normalvec{v}}$ for directional Discrepancy in the direction of $\normalvec{v}$.  Note that a cap can be represented by the plane that cuts it.  Hence, we may interchangeably denote a cap or its boundary plane throughout the paper.
\end{defn}

\begin{figure}[h!]
\begin{subfigure}{0.45 \textwidth}
\includegraphics[width = .8 \textwidth]{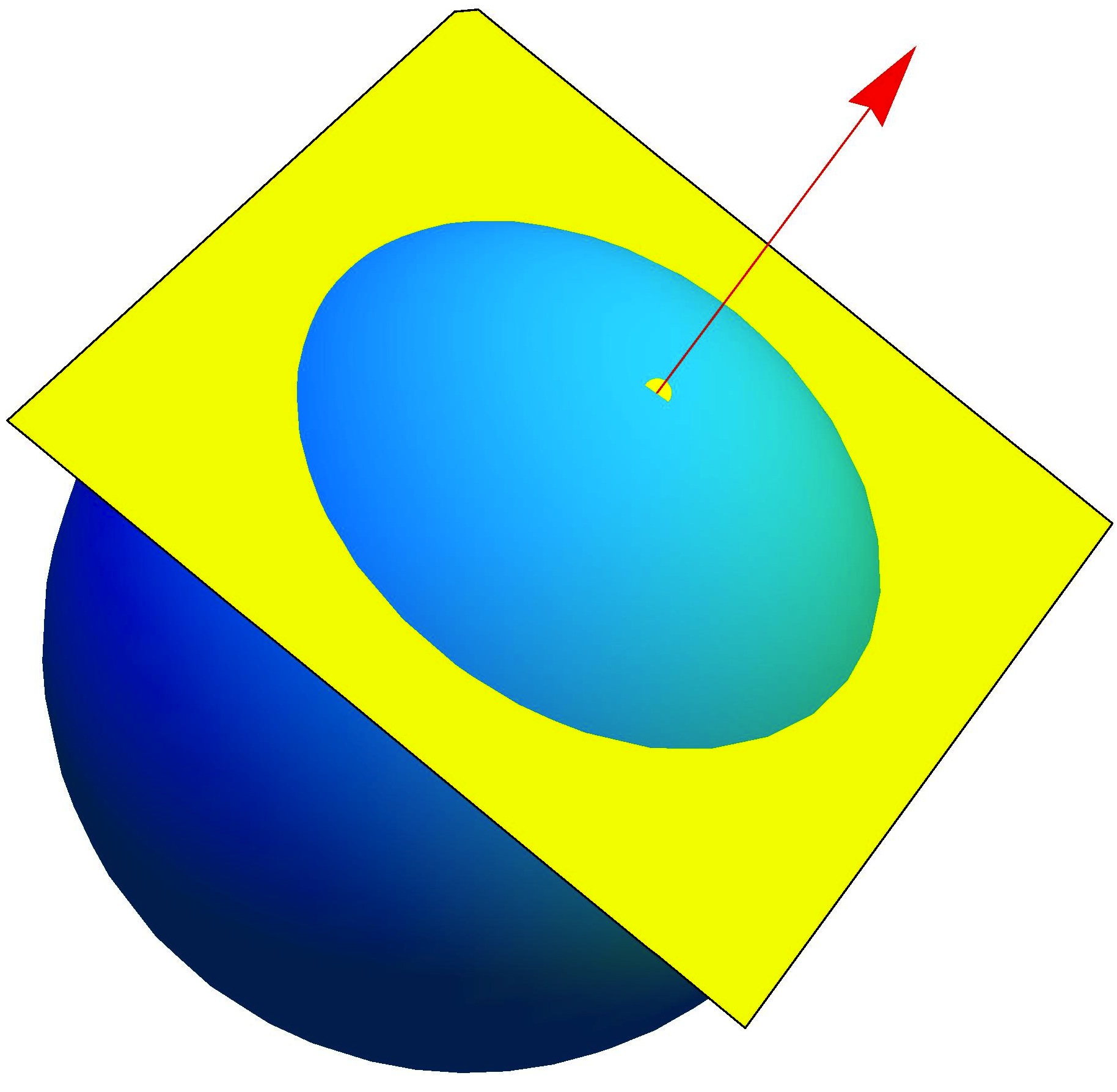}
\caption{A cap and its corresponding normal vector}
 \label{fig: cap with normal vector}
\end{subfigure}
\begin{subfigure}{0.5 \textwidth}
\includegraphics[width =  \textwidth]{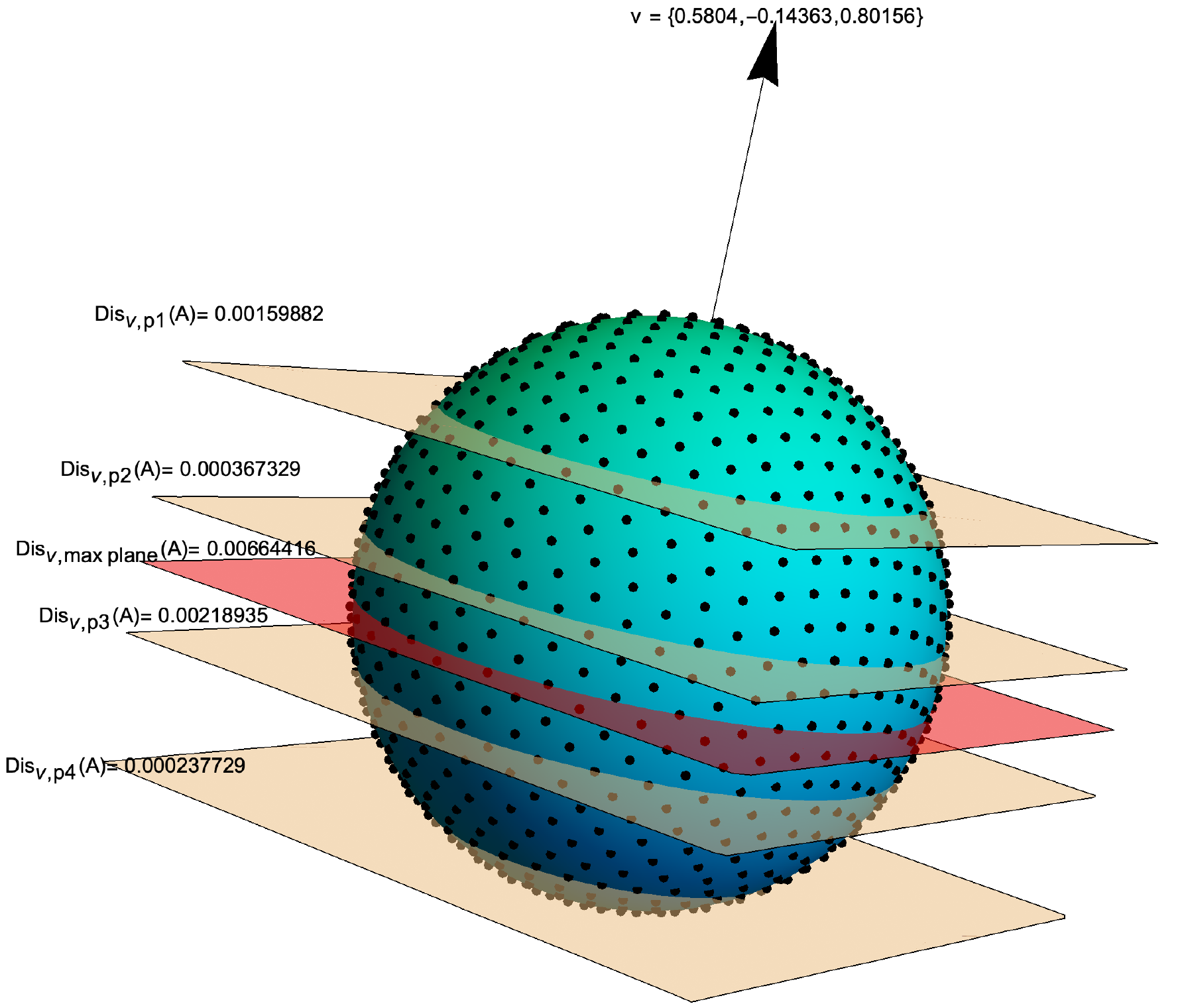}
\caption{Discrepancy of different caps with normal vector $\normalvec{v}$}
 \label{fig: caps discrepancy with same normal vector}
\end{subfigure}
\caption{directed Discrepancy}
\end{figure}
The following Lemma states that the upward jumps of the directed Discrepancy are bounded.  This Lemma is the main ingredient in the development of \alg algorithm. 

\begin{lem} \label{lem:confidence interval}
Let $P$ be a set of points with size $t$ on $\mathds{S}^2$, $\normalvec{v}$ a normal vector, and $ d \in \mathds{R}^+$ such that $Dis_{\normalvec{v}}(P) + \frac{1}{t} \leq d$. 
 Define 
 $P. \normalvec{v} = \cbr{\langle \mvec{p},\normalvec{v} \rangle \sVert[2] \mvec{p} \in P}  $,
 and suppose 
 $s_1 \leq s_2 \leq s_3 \leq ... \leq s_t$ 
 are sorted elements of $P.\normalvec{v}$.  Define
\begin{equation}
 k \define \lfloor t \intoo{ d - Dis_{\normalvec{v}}(P)} \rfloor + 1 ,
\end{equation} 
and 
\begin{equation} \label{eq:confidence radius}
r = \min \cbr{s_{i+k} - s_i \sVert[2] 1 \leq i \leq t - k}.
\end{equation}
Then, for all normal vectors $\normalvec{u}$ that satisfy the inequality $\envert{\normalvec{v} - \normalvec{u}} < r$, we have
 \begin{equation}
Dis_{\normalvec{u}}(P) \leq d.
 \end{equation}
\end{lem}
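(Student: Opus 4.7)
The plan is to fix any direction $\normalvec{u}$ with $|\normalvec{v}-\normalvec{u}|<r$ and show that for every real threshold $h$ the cap $C_{\normalvec{u}}(h)=\{\mvec{x}\in\mathds{R}^{3}:\langle \mvec{x},\normalvec{u}\rangle\ge h\}$ has ${\rm Dis}_{C_{\normalvec{u}}(h)}(P)\le d$; taking the supremum over $h$ then yields ${\rm Dis}_{\normalvec{u}}(P)\le d$. The key geometric input is Cauchy--Schwarz: every $\mvec{p}\in P$ is a unit vector, so
\begin{equation*}
|\langle \mvec{p},\normalvec{u}\rangle-\langle \mvec{p},\normalvec{v}\rangle|=|\langle \mvec{p},\normalvec{u}-\normalvec{v}\rangle|\le|\normalvec{u}-\normalvec{v}|<r.
\end{equation*}
Thus each point's projection along $\normalvec{u}$ is shifted from its $\normalvec{v}$-projection by strictly less than $r$, and in particular any $\mvec{p}\in P$ lying in the symmetric difference $C_{\normalvec{u}}(h)\triangle C_{\normalvec{v}}(h)$ has its $\normalvec{v}$-projection strictly inside $(h-r,h+r)$.

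Writing $N_{\normalvec{w}}(h):=|C_{\normalvec{w}}(h)\cap P|$ and using that the spherical measure is rotationally invariant, so $\mu(C_{\normalvec{u}}(h))=\mu(C_{\normalvec{v}}(h))$, the area terms in the discrepancy cancel when the two caps are compared at the common height $h$, and the triangle inequality yields
\begin{equation*}
{\rm Dis}_{C_{\normalvec{u}}(h)}(P)\le{\rm Dis}_{C_{\normalvec{v}}(h)}(P)+\frac{|N_{\normalvec{u}}(h)-N_{\normalvec{v}}(h)|}{t}\le{\rm Dis}_{\normalvec{v}}(P)+\frac{|N_{\normalvec{u}}(h)-N_{\normalvec{v}}(h)|}{t}.
\end{equation*}
The problem thus reduces to bounding the count difference on the right hand side.

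The combinatorial core is to show $|N_{\normalvec{u}}(h)-N_{\normalvec{v}}(h)|\le k-1$. Split the symmetric difference into two lunes: $C_{\normalvec{u}}(h)\setminus C_{\normalvec{v}}(h)\subseteq\{\mvec{p}\in P:\langle \mvec{p},\normalvec{v}\rangle\in(h-r,h)\}$ and $C_{\normalvec{v}}(h)\setminus C_{\normalvec{u}}(h)\subseteq\{\mvec{p}\in P:\langle \mvec{p},\normalvec{v}\rangle\in[h,h+r)\}$. The defining property $s_{i+k}-s_{i}\ge r$ forces any half-open interval of length strictly less than $r$ to contain at most $k$ of the sorted projections, and a careful accounting, leveraging the strict inequality $|\normalvec{u}-\normalvec{v}|<r$ (which prevents the Cauchy--Schwarz bound from being saturated simultaneously by the points populating both lunes), shaves the count difference down from $k$ to $k-1$.

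Granting this combinatorial bound, the argument closes immediately: since $k-1=\lfloor t(d-{\rm Dis}_{\normalvec{v}}(P))\rfloor\le t(d-{\rm Dis}_{\normalvec{v}}(P))$, we have $(k-1)/t\le d-{\rm Dis}_{\normalvec{v}}(P)$ and therefore
\begin{equation*}
{\rm Dis}_{C_{\normalvec{u}}(h)}(P)\le{\rm Dis}_{\normalvec{v}}(P)+\frac{k-1}{t}\le d
\end{equation*}
uniformly in $h$. I expect the hard part to be precisely this sharpening from the naive pigeonhole bound $k$ down to $k-1$: the former would only yield ${\rm Dis}_{\normalvec{u}}(P)\le d+1/t$, and closing the remaining $1/t$ gap requires genuinely using both the strict inequality $|\normalvec{u}-\normalvec{v}|<r$ and the extremal structure that forces the two lunes to interact when the count jump is maximal.
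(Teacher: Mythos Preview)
Your approach mirrors the paper's exactly: pair $C_{\normalvec{u}}(h)$ with $C_{\normalvec{v}}(h)$ at the same height, cancel the (equal) area terms by rotational invariance, reduce via the triangle inequality to bounding $|N_{\normalvec{u}}(h)-N_{\normalvec{v}}(h)|$, and control each lune by Cauchy--Schwarz together with the gap condition defining $r$. The paper's version of the lune bound is to show that any two points of $P$ in $C\setminus C'$ have $\normalvec{v}$-projections differing by strictly less than $r$, and then to assert that the lune contains ``less than $k$'' points of $P$.

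Your instinct that the $k\to k-1$ sharpening is the delicate step is correct, and the paper does not actually resolve it either. The pigeonhole argument (both yours and the paper's) only yields \emph{at most} $k$ points per lune: if $k+1$ sorted projections $s_{i_1}\le\cdots\le s_{i_{k+1}}$ had span $<r$ then $s_{i_1+k}-s_{i_1}\le s_{i_{k+1}}-s_{i_1}<r$ contradicts the definition of $r$, but $k$ points with span $<r$ cause no contradiction. Since $k=\lfloor t(d-{\rm Dis}_{\normalvec{v}}(P))\rfloor+1>t(d-{\rm Dis}_{\normalvec{v}}(P))$, the bound $\le k$ falls one unit short of what is needed. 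So your proposal matches the paper's proof and correctly isolates the one place where the argument, as written in the paper, is incomplete.
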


\begin{proof}
First of all, note that for each cap $C : \;  \langle \mvec{x},\normalvec{v} \rangle \geq h$, there is a cap with normal vector $\normalvec{u}$ who has the same area as $C$, and whose equation is $C' : \;  \langle \mvec{x},\normalvec{u} \rangle \geq h $.  For these two caps since $\mu (C) = \mu (C')$, the only difference in the right-hand side of equation \eqref{eq:Discrepancy definition wrt one set} is  $ \frac{1}{t} \envert{ \envert{C \cap P} - \envert{C' \cap P} } $, so we only need to show

\begin{equation} \label{eq:lem 1.1}
\frac{1}{t} \envert{ \envert{C \cap P} - \envert{C' \cap P} } \leq d - Dis_{\normalvec{v}}(P). 
\end{equation}

If we do so, by \eqref{eq:lem 1.1}, we obtain
\begin{align*}
{\rm Dis}_{C'}(P)
 & =
 \envert{ \frac{\envert{C' \cap P}}{t} - \frac{\mu (C')}{\mu (\mathds{S}^2)}  } 
 \\ & \leq
\envert{ \frac{\envert{C \cap P}}{t} - \frac{\mu (C)}{\mu (\mathds{S}^2)}  } +  \frac{1}{t} \envert{ \envert{C \cap P} - \envert{C' \cap P} }
\\ & 
\leq  Dis_{\normalvec{v}}(P) + ( d - Dis_{\normalvec{v}}(P)) 
\\ &
\leq d.
 \end{align*}

Because this holds for every cap $C'$ with normal vector $\normalvec{u}$ we get ${\rm Dis}_{\normalvec{u}}(P) \leq d$.

To prove \eqref{eq:lem 1.1}, note that points who belong to $C$ but not to $C'$ satisfy:
\begin{equation*}
    \langle \mvec{x}, \normalvec{v} \rangle \geq h, \quad
    \langle \mvec{x}, \normalvec{u} \rangle < h.
\end{equation*}
Note that since $t ( d - {\rm Dis}_{\normalvec{v}}(P)) \geq 1$, if  $ \envert{ \envert{C \cap P} - \envert {C' \cap P} } < 2$, \eqref{eq:lem 1.1} is automatically satisfied.  Now, let $\mvec{x}, \mvec{y} \in C - C'$ and without loss of generality assume $\mvec{y} . \normalvec{v} \leq \mvec{x} . \normalvec{v}$ so,
\begin{equation}
\langle \mvec{x}, \normalvec{u} \rangle < h \leq \langle \mvec{y}, \normalvec{v} \rangle \leq \langle \mvec{x}, \normalvec{v} \rangle,
\end{equation}
which implies
\begin{align} \label{eq:lem 1.2}
\envert{\langle \mvec{x}, \normalvec{v} \rangle - \langle \mvec{y}, \normalvec{v} \rangle} \leq \envert{\langle \mvec{x}, (\normalvec{v} - \normalvec{u}) \rangle} \leq   \norm{\mvec{x}} \norm{\normalvec{v} - \normalvec{u}} < r. 
\end{align}
Definition of $r$ implies the number of points of $P$ that may satisfy \eqref{eq:lem 1.2} is less than $k$, hence
\begin{equation}
\envert{P \cap (C - C') } \leq t ( d - Dis_{\normalvec{v}}(P)).
\end{equation}
Similarly, we can show $\envert{P \cap ( C' - C)} \leq t ( d - Dis_u(P)) $ so, the total difference between points on $C - C'$, which is difference of these two numbers, is less than $t ( d - Dis_u(P))$.
\end{proof}

\subsection{The Algorithm}
Lemma \ref{lem:confidence interval} is the main ingredient of the \alg algorithm.  The goal of the \alg is to check if $d$ is an upper bound for the Discrepancy of a given distribution of points.  The idea is to cover the surface of the unit sphere, as a representative of all normal vectors, with balls in which we know $d$ is an upper bound for the directed Discrepancy.  Since symmetry of some distributions allows us to consider only a portion of the Sphere, the algorithm is developed to cover any portion of the Sphere with bounds on the polar coordinates.  In other words, if we denote the polar coordinates of the normal vector $\normalvec{v}$ by $ \intoo{\theta, \phi} $, where
\begin{equation} \label{eq:polar coordinate definition}
\normalvec{v} = \intoo{\cos \phi \cos \theta, \cos \phi \sin \theta, \sin \phi}, \quad - \frac{\pi}{2} \leq \phi \leq \frac{\pi}{2}, \quad 0 \leq \theta < 2 \pi,
\end{equation}
then for any given $\phi_{\min}, \phi_{\max}, \theta_{\min}, \text{ and } \theta_{\max}$, the \alg checks if $d$ is an upper bound for directional Discrepancies of directions in the region $\phi_{\min} \leq \phi \leq \phi_{\max}, \quad \theta_{\min} \leq \theta \leq \theta_{\max}$.

\begin{rem} \label{rem:half sphere is enough}
Generally, studying directed Discrepancy for normal vectors on the upper half Sphere is enough since
\begin{equation} \label{eq:Discrepancy for -v}
{\rm Dis}_{- \normalvec{v}} (A) = {\rm Dis}_{\normalvec{v}} (A), \quad \forall \; \normalvec{v} \in \mathds{S}^2.
\end{equation}
In other words, we only need to consider the region $0 \leq \phi \leq \frac{\pi}{2}$.

Equation \eqref{eq:Discrepancy for -v} follows from the fact that the directed Discrepancy is the same for two complementary caps.
\end{rem}

Given $d$, we call $r$, defined in \eqref{eq:confidence radius}, the confidence radius around $\normalvec{v}$.  Algorithm
\ref{alg:directional discerpancy}  describes how the \alg  works.

\begin{algorithm}[h!] 
\SetAlgoLined
\KwResult{Covering family of the normal vectors, or a direction with Discrepancy greater or equal than $d - \frac{1}{t}$, or a set of not covered balls}
Input: $\phi_{\min}, \phi_{\max}, \theta_{\min},  \theta_{\max}, A$ \;
Initialization: $\phi = \phi_{\max}, \quad {\rm cannot\_cover} = \cbr{}$  \;
 \While{$\phi \geq \phi_{\min} $}{
	$r_{\min}$ = Approximate minimum confidence radius at latitude $\phi$ \;
 	$\theta = \theta_{\min}$ \;
	\While{ $ \theta \leq \theta_{\max} $ }{
	$\normalvec{v} = {\rm Polar}(\theta, \phi)$ \;
	$r_v =$ confidence radius at $\normalvec{v}$\;
	\If{$r_v < r_{\min}$}{
	Add $v$ to cannot-cover \;
	$r_v = r_{\min}$ \;
	}
	$\theta = \theta + d_{\theta}$ \;
	}
Check the lowest $\phi$ that is covered \;
Choose next $\phi$ accordingly \;
}
not-covered = $\emptyset$ \;
\While{\rm cannot\_cover $\neq \emptyset$}{
$ \normalvec{v} = {\rm pop (cannot\_cover )}$ \;
\eIf{Can cover ball with center $\normalvec{v}$ and radius $r_v$ with smaller balls }
{\textbf{continue}}
{Add $\normalvec{v}$ to not-covered}
}
\eIf{ \textit{not-covered} = $\emptyset$}{\textbf{return} Cover is completed}{\textbf{return} The algorithm cannot cover balls around vectors in \textit{not-covered}}
 \caption{ \alg \label{alg:directional discerpancy}}
\end{algorithm}

\subsection{Choice of the parameters for the \alg algorithm}
There are three parameters in the \alg that need to be tuned, $d_\theta$ at line 13, $\phi$ at line 16, and $r_{\min}$ at line 4.  We will describe how these parameters have been chosen in this section.

\begin{enumerate}
    \item $d_\theta$
    
    Having $\normalvec{v} = (\theta, \phi)$ and its confidence radius $r_{\normalvec{v}}$, we set $d_{\theta}$ such that the next direction, i.e. $(\theta + d_{\theta}, \phi)$ lies on the boundary of the confidence region of $\normalvec{v}$.  One, with a straightforward algebraic effort, can find the following closed form for this value.
    \begin{equation*}
        d_{\theta} = 2 \arcsin \intoo{\frac{r}{2 \cos \phi}}.
    \end{equation*}
    
    \item $\phi$
    
    With a binary search, we choose the next $\phi$ to be the lowest possible value such that two consecutive balls with radius $r_{\min}$ at the latitude intersects above the last cover latitude.  This is shown in Figure \ref{fig:next phi}.
    
    \begin{figure}[h]
        \centering
        \includegraphics[scale=.75]{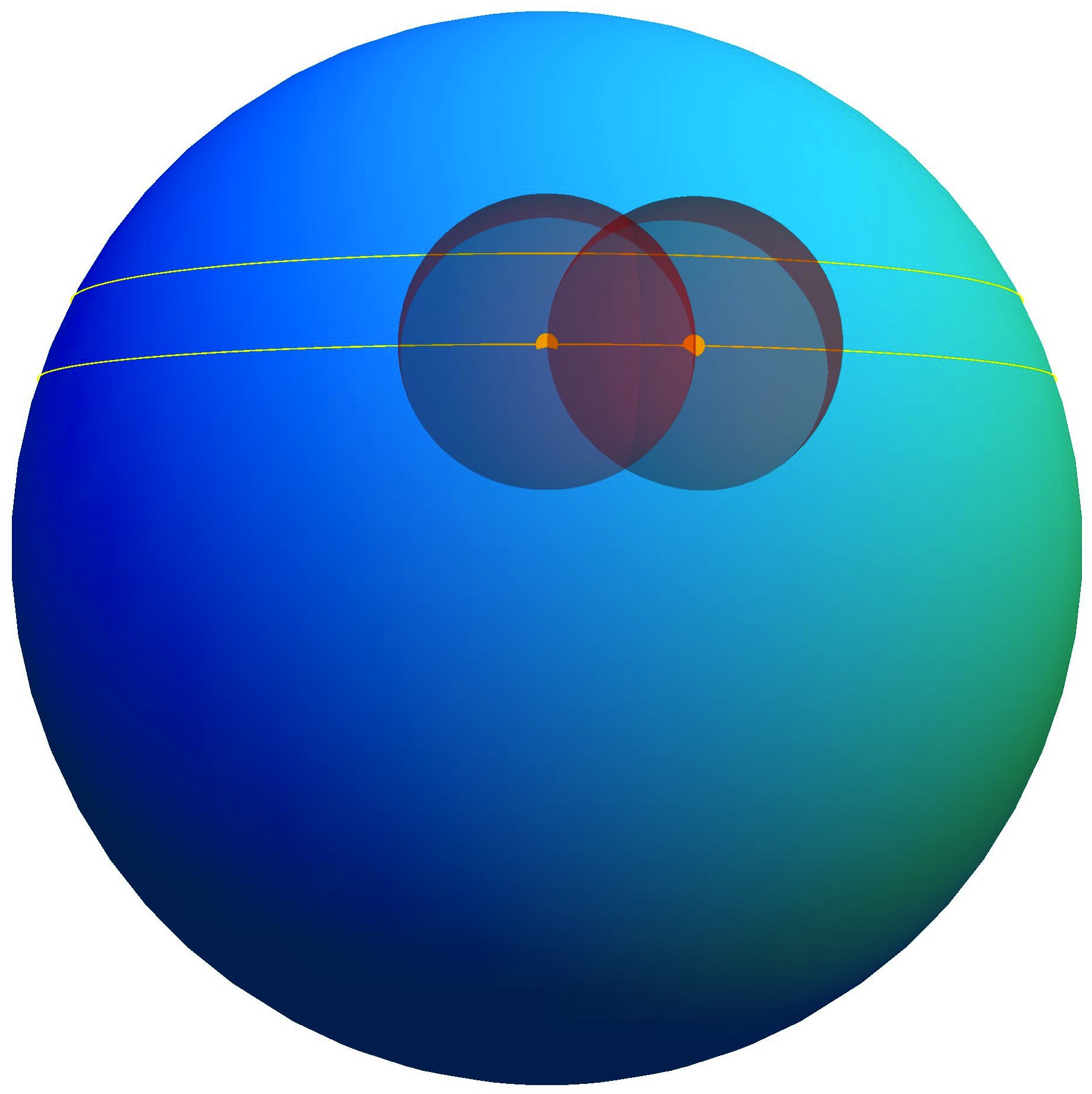}
        \caption{Choice of the new latitude based on the last covered one}
        \label{fig:next phi}
    \end{figure}
    
    \item $r_{\min}$
    
    At each latitude $\phi$, we take a sample of $20$ directions distributed uniformly at this latitude and calculate their confidence radius.  If ${\rm Med}(r)$ denotes the median of this data, we fix
    \begin{equation}
        r_{\min} = 0.5 {\rm Med}(r).
    \end{equation}
    We experimentally found that this setting makes the algorithm run faster.
\end{enumerate}

\subsection{A practical feature}

This section describes what happens in lines $19 - 25$ of Algorithm \ref{alg:directional discerpancy}. In order to prevent the Directional Discrepancy algorithm to get stuck in a direction with a notably smaller confidence radius in comparison to its neighbors, we approximate the average confidence radius  on an orbit through some samples and set a minimum acceptable radius for the given orbit, which excludes the outliers.  Through the run, if we get a direction whose confidence radius is less than the expected minimum we pass the direction to a module that covers the minimum expected radius around that direction with confidence radii around some close directions.  This module is called {Cover Cap} which is described in the following.

\subsubsection{Cover Cap}

As described above, the Cover Cap module aims to cover a cap around some outlier direction with the confidence caps around adjacent directions.  It is based on the following simple mathematical lemma.

\begin{lem} \label{lem: cover cap}
For all $ r \leq \sqrt{2} $, a cap of radius $r$ can be covered by $8$ caps of radius $\frac{r}{2}$, where one of them is centered at the center of the cap and the other $7$ caps are centered uniformly at the circle of distance $.86 r$ from the cap's center.  
\end{lem}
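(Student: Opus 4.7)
The plan is to reduce the claim to a one-parameter trigonometric inequality and settle it with elementary algebra. Place the centre of the large cap at the north pole $\normalvec{v} = (0,0,1)$ and write every point of $\mathds{S}^{2}$ in polar coordinates $(\theta,\phi)$ with $\theta$ measured from $\normalvec{v}$. Using $|\mvec{a}-\mvec{b}|^{2} = 2 - 2\langle \mvec{a},\mvec{b}\rangle$ for unit vectors, the big cap of chordal radius $r$ becomes $\{\cos\theta \geq 1 - r^{2}/2\}$, the central sub-cap of radius $r/2$ becomes $\{\cos\theta \geq 1 - r^{2}/8\}$, and the $k$-th outer sub-cap, centred at $\mvec{c}_{k} = (\sin\theta_{c}\cos\tfrac{2\pi k}{7},\, \sin\theta_{c}\sin\tfrac{2\pi k}{7},\, \cos\theta_{c})$ with $\cos\theta_{c} = 1 - (0.86\,r)^{2}/2$, becomes $\{\langle \mvec{u},\mvec{c}_{k}\rangle \geq 1 - r^{2}/8\}$.

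Any point not covered by the central sub-cap lies in the spherical annulus $1 - r^{2}/2 \leq \cos\theta_{u} < 1 - r^{2}/8$. By the seven-fold rotational symmetry of the outer configuration, I may assume that the azimuthal gap $\alpha$ between $\mvec{u}$ and the nearest $\mvec{c}_{k}$ satisfies $\alpha \leq \pi/7$, and coverage by that $\mvec{c}_{k}$ reduces to
\begin{equation*}
 \sin\theta_{c}\sin\theta_{u}\cos\alpha + \cos\theta_{c}\cos\theta_{u} \; \geq \; 1 - r^{2}/8.
\end{equation*}
The left-hand side is increasing in $\cos\alpha$, so it suffices to check $\alpha = \pi/7$. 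The remaining function $g(\theta_{u}) = A\sin\theta_{u} + B\cos\theta_{u}$, with $A = \sin\theta_{c}\cos(\pi/7) > 0$ and $B = \cos\theta_{c} > 0$, equals $\sqrt{A^{2}+B^{2}}\,\sin(\theta_{u}+\delta)$ for some $\delta \in (0,\pi/2)$ and is therefore concave on $[0,\pi/2]$; hence its minimum over the annular interval $\theta_{u} \in [\arccos(1-r^{2}/8),\, \arccos(1-r^{2}/2)]$ is attained at one of the two endpoints.

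Setting $s = r^{2} \in (0,2]$ and substituting $\sin^{2}\theta_{u} = 1 - (1-s/2)^{2}$ and $\sin^{2}\theta_{c} = 1 - (1 - 0.3698\,s)^{2}$ into each endpoint, the coverage inequality takes the form $\sqrt{P(s)}\,\cos(\pi/7) \geq Q(s)$ with explicit low-degree polynomials $P,Q$, both sides non-negative on the relevant range; squaring turns it into a polynomial inequality in $s$ of degree at most four, whose non-negativity on $(0,2]$ is certified by a direct discriminant computation. The main obstacle I expect is the inner-boundary case $\cos\theta_{u} = 1 - r^{2}/8$, where the outer sub-caps barely reach: the margin is small, the constant $0.86$ is essentially chosen to make this case just feasible, and verifying that the post-squaring polynomial stays non-negative on the entire interval $(0,2]$ (rather than on a subinterval forced by the chord constraint $r\leq\sqrt{2}$) is the delicate computational step.
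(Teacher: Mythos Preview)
Your approach is correct and is essentially the same as the paper's. Both arguments place the big cap at the North Pole, exploit the seven-fold symmetry to reduce to azimuth $\pi/7$, and then check the two extremal latitudes; the paper phrases this as locating the upper and lower intersection points of two consecutive outer caps and showing that the upper one lies inside the central $r/2$-cap while the lower one lies outside the $r$-cap, which by the concavity you make explicit is exactly your endpoint check on the annulus $[\arccos(1-r^{2}/8),\,\arccos(1-r^{2}/2)]$. The final step is handled informally in both write-ups---the paper appeals to a plot of the two roots of the quadratic against $1-r^{2}/8$ and $1-r^{2}/2$, while you propose squaring to a quartic in $s=r^{2}$---so your proposal is at the same level of rigor, with the concavity reduction arguably a cleaner justification of why those two boundary checks suffice.
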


\begin{figure}[h]
    \centering
    \includegraphics[scale=.75]{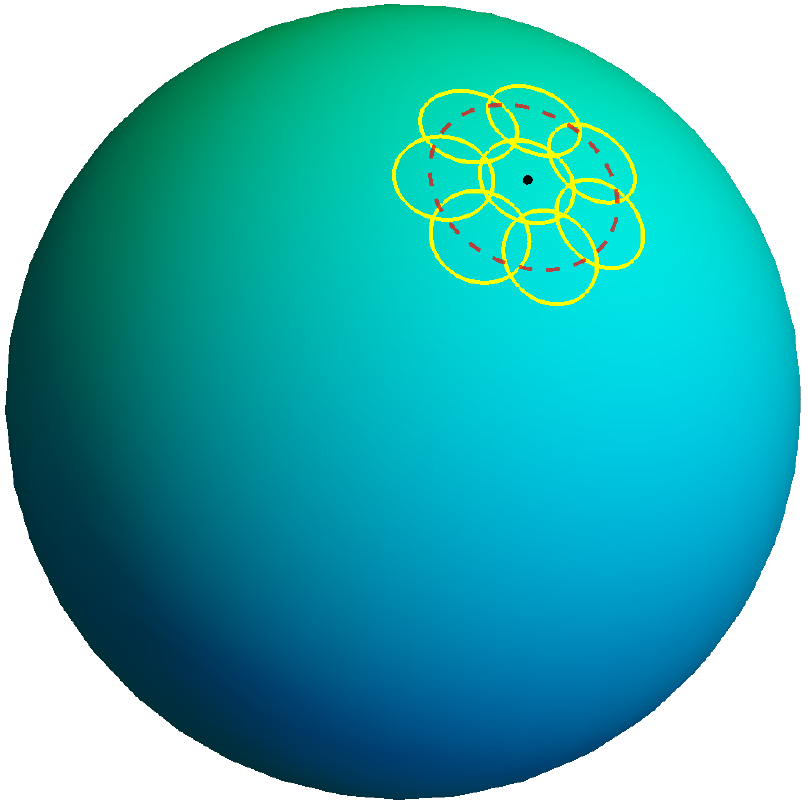}
    \caption{Coverage of a cap of radius $r$ with 8 caps of radius $\frac{r}{2}$}
    \label{fig: cover cap}
\end{figure}

\begin{proof}
Without loss of generality, we may assume the cap is centered at the North Pole, $\intoo{0, 0, 1}$.  It suffices to show the upper intersection of two consecutive caps at distance $.86 r$ is inside the cap around the North Pole with radius $\frac{r}{2}$ and the lower intersection lies out of the cap with radius $r$ around the North Pole.

One can easily confirm that the circle of distance $.86 r$ is located at the polar coordinates equation $\varphi = \arcsin \intoo{\frac{2 - (.86 r)^2}{2}}$.  Hence we may choose the consecutive cap centers $\normalvec{v}_1 = (0, \varphi)$ and $ \normalvec{v}_2 = \intoo{\frac{2 \pi}{7}, \varphi}$ in polar coordinates.  Let call the upper and lower intersections of caps centered at $\normalvec{v}_1, \normalvec{v}_2$; $\mvec{u}$ and $\mvec{d}$, respectively.

One can easily check both $\mvec{u}$ and $\mvec{d}$ satisfy polar equation
\begin{equation}
    \theta = \frac{\pi}{7}.
\end{equation}
Moreover, we have $\norm{\normalvec{v}_1 - \mvec{u}} =  \frac{r}{2} $ or equivalently, $ \langle \normalvec{v}_1, \mvec{u} \rangle = 1 - \frac{r^2}{8} $, which yields 

\begin{align} \label{eq: sin phi equation}
    \cos \varphi \cos \frac{\pi}{7} \cos \phi_u + \sin \varphi \sin \phi_u = 1 - \frac{r^2}{8},
\end{align}
where $\intoo{\frac{\pi}{7}, \phi_u}$ denotes the polar coordinates of $\mvec{u}$, and $\sin \varphi = 1 - \frac{.86 ^ 2}{2} r^2$.  Similarly, $\mvec{d}$ satisfies equation \eqref{eq: sin phi equation}.  Hence, if we replace $\cos \phi_u = \sqrt{1 - \sin^2 \phi_u} $ and solve the quadratic equation \eqref{eq: sin phi equation} for $x \define \sin \phi_u$, we obtain
\begin{equation} \label{eq: sin phi closed form}
    \sin \phi_{\mvec{u}}, \sin \phi_{\mvec{d}} =
    \frac{1.796 r^4 + \left( \pm 1.881 \sqrt{-0.4790 r^4-2.653 r^2+28.36}-19.22\right) r^2+38.84}{r^4-5.408 r^2+38.84}.
\end{equation}

To complete the proof, we need to show
\begin{equation*}
    \sin \phi_{\mvec{u}} \geq 1 - \frac{r^2}{8}, \quad \sin \phi_{\mvec{d}} \leq 1 - \frac{r^2}{2}, 
\end{equation*}
for $r \in \intoc{0, \sqrt{2}}$.  While one can algebraically prove these inequalities hold, we think the following figure is more informative to do so.

\begin{figure}[h]
    \centering
    \includegraphics[scale = .7]{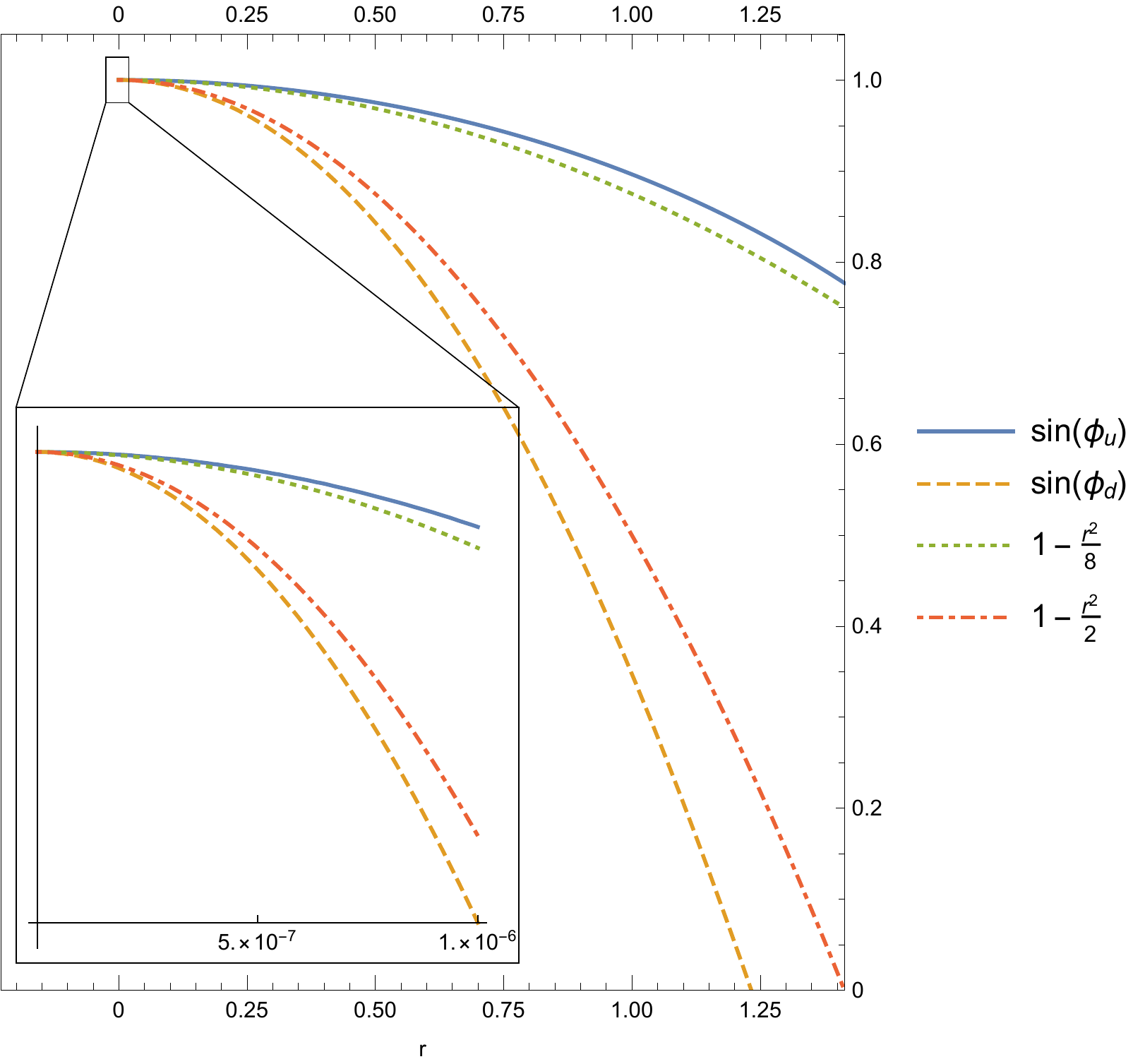}
    \caption{Relation of the roots of \eqref{eq: sin phi equation}, $ 1 - \frac{r^2}{2}$, and $1 - \frac{r^2}{8}$ }
    \label{fig:sin equation roots and bounds}
\end{figure}

As Figure \ref{fig:sin equation roots and bounds} denotes, plots of $sin(\phi_{\mvec{u}}), sin(\phi_{\mvec{d}})$ and the required bounds touch each other at $(0, 1)$, and for $r \in \intoc{0, \sqrt{2}}$ we have
\begin{equation*}
    \sin(\phi_{\mvec{u}}) > 1 - \frac{r^2}{8} > 1 - \frac{r^2}{2} > \sin(\phi_{\mvec{d}}),
\end{equation*}
which concludes the proof.
Figure \ref{fig: cover cap} denotes the coverage of a cap with smaller caps.
\end{proof}

If the confidence radius at a vector $\normalvec{v}$ is less than the minimum expected radius $r_{\rm min}$, We check if the confidence radius at each of the centers given in Lemma \ref{lem: cover cap} is larger than $\frac{r_{\rm min}}{2}$.  If so, we increase the confidence radius at $\normalvec{v}$ to $r_{\rm min}$.  Moreover, if the confidence radius at any center is less than $\frac{r_{\rm min}}{2}$ we can repeat this machinery.  This recursive procedure will stop after a preset number of radius checking.  If it fails to cover the expected radius, it will return the small uncovered area as a region where one might find a larger Discrepancy or some singularity in confidence radius.  This sub-module is called {Cover Cap}.  Note that Cover Cap also serves as a searching method for finding directions with high Discrepancy and will report the largest Discrepancy it found at all the singular caps.

\subsection{Time complexity}
It is not hard to show for any distribution of points like $A \subset \mathds{S}^2$ there are finitely many caps with the possibility of having the maximum Discrepancy.  Planes that cut those caps should pass through a few points from $A$.  However, the number of these candidate caps is $O(\abs{A}^3)$.  Moreover, calculating the Discrepancy of each cap has the time complexity $O(\abs{A})$, hence in total such method will take $O(\abs{A}^4)$ to run.
Nevertheless, \alg can provide us with upper bounds with arbitrary but fixed deviation from the actual Discrepancy, say $0.001$ for instance, in $ O(\abs{A}) $. 

Note that, since Lemma \ref{lem:confidence interval} is the main tool used in the \alg algorithm, it cannot confirm upper bounds that are close to the actual Discrepancy more than $ 1 / \abs{A} $.  However, this is not restrictive to get arbitrarily close to the true answer as the size of $A$ grows to infinity.  The following Theorem gives an upper bound for the time complexity of the \alg algorithm.

\begin{thm} \label{thm:time complexity}
Assume we are interested to check if the directed Discrepancy in a region $D \subset \mathds{S}^2$ is less than $d$.  Let $d_{\max}$ denotes the maximum directed Discrepancy in this region. Let $A$ be the set of points distributed on the $\mathds{S}^2$, and denote its size by $t$.  If the following assumptions are satisfied:
\begin{enumerate}
\item \label{asumption: d > D_max + c}
$ d > d_{\max} + \frac{h - 1}{t},$ for some positive integer $h$,
\item \label{asumption:projections not too closed}
Let $\langle A, \normalvec{v} \rangle_{(1)} \leq \langle A, \normalvec{v} \rangle_{(2)} \leq ... \leq \langle A, \normalvec{v} \rangle_{(T)}$ be the sorted set of projection of $A$ on the normal vector $\normalvec{v}$. Then assume
\begin{equation} \label{eq:def of r_min}
r_{\min} \define \min\limits_{\substack{1 \leq i \leq t - h \\ \normalvec{v} \in D }} \langle A, \normalvec{v} \rangle_{(i + h)} - \langle A, \normalvec{v} \rangle_{(i)} > 0,
\end{equation}

\end{enumerate}
Then, the \alg can cover the whole region $D$, and its run time is $ O \intoo{ \frac{t}{r_{\min}^2} } $.
\end{thm}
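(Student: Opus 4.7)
The plan is to split the running-time bound into two independent estimates: a uniform lower bound on the confidence radius the algorithm encounters inside $D$, and an upper bound on the number of centers needed to cover $D$ with balls of that size. The per-center work is linear in $t$, since for each new $\normalvec{v}$ one projects the $t$ points, sorts, and scans the sorted list for the smallest window of length $k$ used in \eqref{eq:confidence radius}; combining the two estimates will give the claimed $O(t / r_{\min}^2)$.

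For the lower bound on confidence radii I would apply assumption \ref{asumption: d > D_max + c} pointwise. At any $\normalvec{v} \in D$ we have $Dis_{\normalvec{v}}(P) \le d_{\max}$, so $t(d - Dis_{\normalvec{v}}(P)) > h - 1$, and therefore the integer $k$ defined in Lemma \ref{lem:confidence interval} satisfies $k \ge h$. Because the quantity $\min_i (s_{i+k} - s_i)$ is non-decreasing in $k$, the confidence radius at $\normalvec{v}$ is at least the value obtained with $k = h$, which by assumption \ref{asumption:projections not too closed} is at least $r_{\min}$. In particular, the Cover Cap fallback is never triggered on directions in $D$ and every ball the algorithm deposits has radius at least $r_{\min}$.

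To count centers, note that $D \subset \mathds{S}^2$ has area $O(1)$ while each confidence ball (of radius at least $r_{\min}$, measured in the ambient Euclidean metric as in Lemma \ref{lem:confidence interval}) covers area $\Omega(r_{\min}^2)$ on the sphere, so any cover of $D$ by such balls contains at most $O(1/r_{\min}^2)$ of them. I would then verify that the adaptive step rules in Algorithm \ref{alg:directional discerpancy} realize this count up to a constant factor: the longitudinal step $d_\theta = 2 \arcsin\bigl(r/(2 \cos \phi)\bigr)$ at latitude $\phi$ places $O(\cos \phi / r_{\min})$ centers on that latitude, and the binary-search rule for the next $\phi$ lowers the latitude by $\Theta(r_{\min})$ per sweep, giving $O(1/r_{\min})$ latitudes. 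Multiplying $O(1/r_{\min}^2)$ centers by the $O(t)$ per-center cost yields the theorem.

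The step I expect to be hardest is precisely this constant-factor accounting of the adaptive sweep, since the algorithm uses the pointwise confidence radius $r_{\normalvec{v}}$ and a latitude-local surrogate $r_{\min}^{\mathrm{Med}} = 0.5\,\mathrm{Med}(r)$ rather than the global $r_{\min}$ of the theorem, and one has to check that this mismatch cannot inflate the visit count beyond a constant multiple of $1/r_{\min}^2$, nor cause the Cover Cap recursion to run more than $O(1)$ levels deep. The uniform bound $r_{\normalvec{v}} \ge r_{\min}$ established in the second paragraph is exactly what rules this out, which explains why both assumptions of the statement are needed.
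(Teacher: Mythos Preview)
Your proposal follows the paper's route: a uniform lower bound $r_{\normalvec{v}} \ge r_{\min}$ on confidence radii, then $O(1/r_{\min})$ centers per latitude times $O(1/r_{\min})$ latitudes, each center costing $O(t)$. The paper fills in the step you flag as hardest by explicitly computing the lower intersection of two adjacent confidence balls of radius $r_{\min}$ on the same orbit and obtaining $\phi - \psi \ge \arccos\!\bigl(1 - r_{\min}^2/4\bigr) \ge r_{\min}/\sqrt{2}$; conversely, your $k \ge h$ argument in the second paragraph is more careful than the paper, which uses $r_{\normalvec{v}} \ge r_{\min}$ without spelling this out. One quibble: the sentence ``any cover of $D$ by such balls contains at most $O(1/r_{\min}^2)$ of them'' is false as stated---covers can be arbitrarily redundant---but you do not actually rely on it, since you immediately verify the algorithm's specific step rules.
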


\begin{proof}
Without losing generality, we assume the North pole is included in the region $D$.  The proof for the other case is analogous. 
Note that at latitude $\phi$ we have $d_{\theta} \geq 2 \arcsin \intoo{\frac{r_{\min}}{2 \cos \phi}} \geq r_{\min}$, hence at each orbit, we put at most $\frac{2 \pi}{r_{\min}}$
directions to cover it completely.
Moreover, let $\normalvec{v}_1 = (0, \phi)$ and $v_2 = (d_\theta, \phi)$ be two consecutive centers at this orbit whose distance is $r_{\min}$, which means
\begin{equation}
d_\theta = 2 \arcsin \intoo{\frac{r_{\min}}{2 \cos \phi}}.
\end{equation}
We want to upper bound the latitude of the lower intersection of balls with radius $r_{\min}$ centered at $\normalvec{v}_1, \normalvec{v}_2$ to bound the total number of orbits.  Denote the polar coordinates of such intersection with $\mvec{u} = (\tau, \psi)$.  Therefore we have
\begin{equation} \label{eq:relation of v1 v2 and u}
\norm{\normalvec{v}_1 - \mvec{u}} = \norm{\normalvec{v}_2 - \mvec{u}} = \norm{\normalvec{v}_1 - \normalvec{v}_2} = r_{\min}, \quad \psi < \phi.
\end{equation}

It is straightforward to deduce from \eqref{eq:relation of v1 v2 and u} that
\begin{equation} \label{eq:inner product of vi and u}
\langle \normalvec{v}_1, \mvec{u} \rangle = \langle \normalvec{v}_2, \mvec{u} \rangle = 1 - \frac{r_{\min}^2}{2},
\end{equation}
thus
\begin{equation}
\langle \normalvec{v}_1 - \normalvec{v}_2, \mvec{u} \rangle = 0.
\end{equation}
Hence,
\begin{align*}
0 &= 
(\cos d_\theta - 1) \cos \tau + \sin d_\theta \sin \tau
= \cos( d_\theta - \tau ) - \cos \tau,
\end{align*}
which yields
\begin{equation} \label{eq:tau is dtheat/2}
\tau = \frac{d_\theta}{2} = \arcsin \intoo{\frac{r_{\min}}{2 \cos \phi}}.
\end{equation}

Note that since $r_{\min}$ is a lower bound for confidence radius of all directions in $D$, including the North Pole, the maximum $\phi$ that we need to cover is equal or less than $\arcsin \intoo{1 - \frac{r_{\min}^2}{2}}$.

Now, by \eqref{eq:relation of v1 v2 and u}, we obtain
\begin{align} \nonumber
1 - \frac{r_{\min}^2}{2} &=
\langle \normalvec{v}_1, \mvec{u} \rangle
\\ & = \nonumber
\cos \phi \cos \psi \cos \tau + \sin \phi \sin \tau
\\ & = \nonumber
\cos (\phi - \psi) + (\cos \tau - 1) \cos \phi \cos \psi
\\ & \geq \nonumber
\cos (\phi - \psi) + (\cos \tau - 1) \cos^2 \phi 
\\ & = \label{eq:thm time 1}
\cos (\phi - \psi) + \intoo{\sqrt{1 - \frac{r_{\min}^2}{4 \cos^2 \phi}} - 1} \cos^2 \phi. 
\end{align}
The last equality is derived by using \eqref{eq:tau is dtheat/2}.  Note that
\begin{equation*}
\intoo{1 - \sqrt{1 - \frac{r_{\min}^2}{4 \cos^2 \phi}}} \cos^2 \phi \leq \frac{r_{\min}^2}{4},
\end{equation*}
therefore by \eqref{eq:thm time 1} we obtain
\begin{equation*}
\cos (\phi - \psi) \leq 1 - \frac{r_{\min}^2}{4},
\end{equation*}
or in other words,
\begin{equation} \label{eq:psi upper bound}
\phi - \psi \geq - \arccos \intoo{1 - \frac{r_{\min}^2}{4}}.
\end{equation}
This means we need at most 
\begin{equation} \label{eq:upper bound for number of orbits}
\frac{\frac{\pi}{2}}{\arccos \intoo{1 - \frac{r_{\min}^2}{4}}}
\leq
 \frac{\pi}{ \sqrt{2} r_{\min}}
\end{equation}
orbits to cover the semi-Sphere.  The last inequality is deduced from the inequality $ \arccos (1 - x) \geq \sqrt{2 x} $, which is immediate result by the concavity of $\arccos$ in $[0, 1]$.

Since that obtaining the directed Discrepancy and the confidence radius at any given normal vector is an $O(t)$ computation, given the number of directions at each orbit is at most $\frac{2 \pi}{r_{\min}}$, and \eqref{eq:upper bound for number of orbits}, we deduce the time complexity of \alg algorithm is at most $ O \intoo{\frac{t}{r_{\min}^2}} $, as desired.

\end{proof}

\begin{rem} \label{rem: cover cap time complexity}
Since the number of recursions of the Cover Cap at each direction is upper bounded by a constant, the time complexity of the Cover Cap in total is $n_{CC} O(t)$, where $n_{CC}$ denotes the number of directions for which the Cover Cap was called.  Hence, the Cover Cap module does not make any change in final time complexity which is $n_{DD} O(t)$, where $n_{DD}$ is the total number of directions needed to cover the area of interest.  If one sets the expected radius at each orbit such that $\frac{n_{CC}}{n_{DD}} = o(1)$, then the run time of the Cover Cap will be negligible.
\end{rem}

\begin{rem} \label{rem:r <= 2/t}
Note that since there are $t$ projected points, we always have 
\begin{equation} \label{eq:r upper bound}
r \leq \frac{2 h}{t},
\end{equation}
regardless of the distribution.
\end{rem}

Note that $r_{\min}$ is closely related to the Discrepancy in the sense that small $r_{\min}$ shows a dense concentration of points of the set $A$ around some affine plane, hence when Discrepancy of a distribution is small we expect to have large $r_{\min}$. If for a given distribution we have $r_{\min} \geq \frac{c h}{t}$, i.e. $r_{\min} = \Omega(d - d_{\max})$, then Theorem \ref{thm:time complexity} states the time complexity is $O \intoo{\frac{t}{(d - d_{\max})^2}}$.
In this case, the following remarks offer an upper bound on the run time of the Directional Discrepancy algorithm in terms of $t$.  For smaller $ r_{\min} $, the run time gets larger accordingly which can be obtained in a similar manner.

\begin{rem} \label{rem:t = O(t)}
Suppose $r_{\min} = \Omega(d - d_{\max})$, and for a positive $\epsilon$ we set $\epsilon = d - d_{\max} $, then by Theorem \ref{thm:time complexity}, the \alg runs in $O(t)$ and can prove upper bound $ d_{\max} + \epsilon$ when the Discrepancy is $d_{\max}$.
\end{rem}

\begin{rem}
Assume we are in the same setting as Remark \ref{rem:t = O(t)}. When $d_{\max} = O(t^{\frac{-1}{2}})$, which is the case, for best known deterministic distributions, by letting $ d - d_{\max} = \epsilon t^{\frac{-1}{2}} $ the \alg can prove the upper bound of order $d_{\max} + \epsilon t^{-\frac{1}{2}}$.  In this case, the running time is at most $O(t^2)$.

Similarly, when $d_{\max} = O \intoo{t^{-\frac{3}{4}}}$, the \alg algorithm can prove upper bounds with a small relative error while its time complexity is $O(t^{\frac{5}{2}})$.  Recall that from \cite{beck-1984}, we know the minimum possible Discrepancy is $O(t^{-\frac{3}{4}})$.  This means the \alg is capable of confirming upper bound that have a small-scaled relative deviation from the exact Discrepancy at a cost of extra $t^{\frac{3}{2}}$ factor in the run time.
\end{rem}

\section{An application} \label{sec:an application}

\subsection{Polar Coordinates} \label{subsec:polar coordinates}
Polar Coordinates has been introduced in \cite{katanforoush2003} as a distribution of points to optimize the Coulomb potential.  It distributes points on the Sphere as follows.  Given a positive integer $n$, Polar Coordinates puts a couple of points at two poles and in $n - 1$ orbits, whose latitudes are given by
\begin{equation} \label{eq:phis of polar coordinates}
\phi_j = \frac{\pi j}{n} - \frac{\pi}{2}, \quad j = 1, ..., n-1.
\end{equation}
Moreover, define $n_j = \lfloor \frac{1}{2} + \sqrt{3} \cos \phi_j \rfloor$ for $1 \leq j \leq n - 1$.  At $j^{\rm th}$ latitude, it has $n_j$ equally spaced points starting at $\theta = 0$.  Then they shift alternate latitude to achieve a more symmetric distribution.

During our study on Discrepancy, we found the Polar Coordinates can be a candidate for one of the lowest known Discrepancy among deterministic distributions on the Sphere.  First of all, we have shown the following,

\begin{thm}\label{thm:polar coordinates North Pole dis}
If Polar Coordinates puts $t$ points on the Sphere, for the North Pole $= \intoo{0, 0, 1}$ the directed Discrepancy is $O(t^{-\frac{1}{2}})$.
\end{thm}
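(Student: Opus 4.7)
The plan is to exploit the very special structure of Polar Coordinates with respect to the north pole direction: every point lies on one of only $n+1$ latitudes, and the projection onto the polar axis is constant within each latitude. This reduces the directed discrepancy to a one-dimensional combinatorial computation on $[-1,1]$.

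First I would reduce to a finite maximum over jump heights. Writing $N(h) \coloneqq |\{p \in P : p_3 \geq h\}|$ and noting that a cap $\{x_3 \geq h\}$ has normalized measure $(1-h)/2$, the function
\[
D(h) \coloneqq \frac{N(h)}{t} - \frac{1-h}{2}
\]
is piecewise linear with slope $1/2$ between jumps, and $N$ jumps only at $h \in \{\sin\phi_j\} \cup \{-1,1\}$. Therefore $\sup_h |D(h)|$ is attained at one of these $O(n)$ heights, and it suffices to bound $|D(\sin\phi_m \pm 0)|$ uniformly in $m$.

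Next I would write the count and the total in closed form using the construction. At the height $h = \sin\phi_m + 0^+$ one has $N(h) = 1 + \sum_{j=m+1}^{n-1} n_j$ (the "$+1$" being the north pole point), and $t = 2 + \sum_{j=1}^{n-1} n_j$. Using $\phi_j = \pi j/n - \pi/2$ so that $\cos\phi_j = \sin(\pi j/n)$ and $\sin\phi_m = -\cos(\pi m/n)$, I would expand $n_j$ as its real-valued proxy plus an $O(1)$ floor error per orbit. The core algebraic identity I would deploy is
\[
\sum_{j=m+1}^{n-1} \sin\!\bigl(\pi j/n\bigr) \;=\; \frac{\cos(\pi(m+\tfrac12)/n) + \cos(\pi/(2n))}{2\sin(\pi/(2n))},
\]
together with $\sin(\pi/(2n)) = \pi/(2n) + O(n^{-3})$. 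Substituting and dividing through by $t$, the leading $n^2$-order term of the numerator produces exactly $\frac{1+\cos(\pi m/n)}{2} = \frac{1-\sin\phi_m}{2}$, so the measure cancels to leading order. What remains is the sum of (i) the accumulated floor discrepancy from $n-m$ orbits, of size $O(n)$, (ii) the trigonometric endpoint corrections, also of size $O(n)$, and (iii) a constant from the pole. Dividing any of these by $t = \Theta(n^2)$ produces $O(n^{-1})$ uniformly in $m$.

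Finally, since the scaling of the construction gives $t = \Theta(n^2)$, the uniform bound $|D(\sin\phi_m\pm 0)| = O(n^{-1})$ becomes $O(t^{-1/2})$, yielding ${\rm Dis}_{\normalvec{v}}(P) = O(t^{-1/2})$ for $\normalvec{v} = (0,0,1)$. The main obstacle is bookkeeping rather than any deep estimate: both the floor errors and the endpoint corrections to the Riemann-style sum contribute at order $n$ in the numerator, of the same order as the denominator $t$'s own additive slack, and one must carefully verify that after subtracting the target measure $(1-\sin\phi_m)/2$ the residuals assemble into a bona fide $O(1/n)$ bound uniformly in $m \in \{0,1,\dots,n\}$, including the boundary cases $m$ near $0$ or $n$ where the partial sums become very short or very long.
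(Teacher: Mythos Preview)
Your proposal is correct and follows essentially the same approach as the paper: reduce to caps whose boundary lies on one of the $O(n)$ orbit latitudes, write the point count above latitude $m$ as $\sqrt{3}\,n\sum \cos\phi_i$ plus an accumulated $O(n)$ floor error, evaluate the trigonometric sum in closed form so that its leading part cancels the measure $(1-\sin\phi_m)/2$, and then divide the remaining $O(n)$ numerator by $t=\Theta(n^2)$. The only cosmetic difference is that the paper keeps exact trigonometric expressions through the cancellation (arriving at a residual $\tfrac{\sqrt{3}}{2}\,n\,|\sin(\pi j/n)|/t + 4n/t$) rather than invoking the Taylor expansion $\sin(\pi/(2n))=\pi/(2n)+O(n^{-3})$, but the structure and all the key estimates are the same.
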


\begin{proof}
Let denote the Polar Coordinates with index $n$ by $P_n$.  As discussed in \cite{katanforoush2003}, we have 
\begin{equation*}
    t = \abs{P_n} \simeq \Theta(n^2).
\end{equation*}
It is straightforward to check that the maximum Discrepancy is attained by a cap that passes through at least one point of $P_n$.  Since orbits of Polar Coordinates are on planes with normal vector at the North Pole, we just need to bound the Discrepancy at each cap whose boundary is an orbit of $P_n$.  Moreover, because of the symmetry of $P_n$ to $x y$ plane, it suffices to consider orbits with $ z \geq 0$.  Let $n_j$ denotes the number of points at $j^{\rm th}$ orbit of $P_n$, then the number of points on the cap above the $j^{\rm th}$ orbit is
\begin{align} \label{eq: points above cap j}
    S_j \define & =
    \sum_{i = j}^n n_i = 1 + \sum_{i = j}^{n - 1} \lfloor \frac{1}{2} + \sqrt{3} n \cos \phi_i \rfloor = \sqrt{3} n \sum_{i = j}^{n - 1}    \cos \phi_i + f(j),
\end{align}
where $\abs{f(j)} \leq 2 n$.
Moreover, note that

\begin{equation*}
    \sum_{i = j}^{n - 1}\cos \phi_i =
    \sum_{i = j}^{n - 1}\cos \intoo{\frac{\pi i}{n} - \frac{\pi}{2}}
    = \frac{\cos \intoo{\frac{\pi}{2 n}} + \cos \intoo{\frac{(2j - 1) \pi}{2n}}}{2 \sin \intoo{\frac{\pi}{2 n}}}.
\end{equation*}

Hence, 
\begin{equation} \label{eq: points above cap j simplified}
    S_j = \sqrt{3} n \frac{\cos \intoo{\frac{\pi}{2 n}} + \cos \intoo{\frac{(2j - 1) \pi}{2n}}}{2 \sin \intoo{\frac{\pi}{2 n}}} + f(j), \quad \abs{f(j)} \leq 2n,
\end{equation}
and in particular

\begin{equation} \label{eq: t simplified}
    t = S_0 = \sqrt{3} n \frac{ \cos \intoo{ \frac{\pi}{2 n}}}{ \sin \intoo{\frac{\pi}{2 n}}} + f(0), \quad \abs{f(0)} \leq 2n.
\end{equation}

Let denote the cap above $j^{\rm th}$ orbit by $C_j$.  Note that since $j^{\rm th}$ orbit is located at $z = \sin \phi_j$, we have
\begin{equation}
    \frac{\mu (C_j)}{\mu (\mathds{S}^2)} = \frac{1 - \sin \phi_j}{2}
    = \frac{1 + \cos \frac{\pi j}{2}}{2}.
\end{equation}
Therefore,
\begin{align} \nonumber
    \abs{ \frac{S_j}{t} -  \frac{\mu (C_j)}{\mu (\mathds{S}^2)} } & =
    \frac{1}{t} \abs{ S_j - \frac{1 + \cos \frac{\pi j}{2}}{2} t }
    \\ & \nonumber
    = \frac{1}{t} \abs{ \sqrt{3} n \frac{\cos \intoo{\frac{\pi}{2 n}} + \cos \intoo{\frac{(2j - 1) \pi}{2n}}}{2 \sin \intoo{\frac{\pi}{2 n}}} + f(j)  - \frac{1 + \cos \frac{\pi j}{n}}{2} \intoo{\sqrt{3} n \frac{ \cos \intoo{ \frac{\pi}{2 n}}}{ \sin \intoo{\frac{\pi}{2 n}}} + f(0)}}
    \\ & \nonumber
    \leq \frac{1}{t} \frac{n \sqrt{3}}{2} \abs{ \frac{\cos \frac{\pi}{2 n} + \cos \frac{(2 j - 1) \pi}{2 n} - (1 + \cos \frac{\pi j}{n}) \cos \frac{ \pi}{2 n}}{\sin \frac{\pi}{2 n}} } + \frac{\abs{f(j)} + \abs{f(0)}}{t}
    \\ & \nonumber
    \leq \frac{n \sqrt{3}}{2 t} \abs{ \frac{\cos \intoo{\frac{\pi j}{n} - 
    \frac{\pi}{2n}} - \cos \frac{\pi j}{n} \cos \frac{\pi}{2 n}}{\sin \frac{\pi}{2 n}} } + \frac{4 n}{t}
    \\ & \nonumber
    = \frac{n \sqrt{3} \abs{\sin \frac{\pi j}{n}}}{2 t} + \frac{4 n }{t}
    \\ & \nonumber
    \leq \intoo{\frac{\sqrt{3}}{2} + 4} \frac{n}{t}
    \\ & 
     = O(t^{-\frac{1}{2}}).
\end{align}
The last equality holds since we have $t = \Omega(n^2)$.

\end{proof}

Moreover, we observed that for any $n$ that we could make the Polar Coordinates and run the Directional Discrepancy, North Pole has the highest directed Discrepancy.  In other words, the Discrepancy of Polar Coordinates is the same as the North Pole directed Discrepancy.  Note that Polar Coordinates has a special symmetry around the North Pole. 
This makes us propose the following conjecture and seek for evidence to prove it via the \alg algorithm.

\begin{conj} \label{conj:North Pole has the max Discrepancy}
For any $n \in \mathds{Z}^+$, Polar Coordinates with parameter $n$ has maximum directed Discrepancy in the direction of the North Pole.
\end{conj}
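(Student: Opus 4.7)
The plan is to reduce the supremum over $\normalvec{u} \in \mathds{S}^2$ to a one-parameter family by symmetry, decompose each cap count orbit-by-orbit into a smooth geometric term plus a per-orbit rounding fluctuation, and then argue that the North Pole is extremal because it is the unique direction for which the fluctuation vanishes identically. The point set $P_n$ is invariant under reflection across the $xy$-plane (already leveraged in Remark \ref{rem:half sphere is enough}) and, up to the alternating $\theta$-shifts, invariant under rotation about the $z$-axis; these symmetries reduce the conjecture to comparing ${\rm Dis}_{\normalvec{u}(\psi)}(P_n)$ along the one-parameter family $\normalvec{u}(\psi) = (\sin\psi, 0, \cos\psi)$ for $\psi \in [0, \pi/2]$, with $\psi = 0$ already controlled by Theorem \ref{thm:polar coordinates North Pole dis}.

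For fixed $\psi > 0$ and height $h$, the cap $C_{\psi, h} = \cbr{\mvec{x} \in \mathds{S}^2 : \langle \mvec{x}, \normalvec{u}(\psi)\rangle \geq h}$ intersects the supporting horizontal circle of orbit $j$ in an arc of normalized length $\lambda_j(\psi, h) \in [0, 1]$, an explicit function of $\psi, \phi_j, h$. Since orbit $j$ carries $n_j$ equally spaced points, its contribution to $\envert{C_{\psi, h} \cap P_n}$ equals $n_j \lambda_j(\psi, h) + e_j$ with $\envert{e_j} \leq 1$, where $e_j$ depends on the alternating $\theta$-offset. Simultaneously, slicing by horizontal slabs writes $\mu(C_{\psi, h}) = \sum_j \lambda_j(\psi, h) A_j + O(1/n)$ for $A_j$ the area of a thin annulus around $\phi_j$. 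The identities $n_j \approx \sqrt{3}\, n \cos \phi_j$ and $A_j \approx \tfrac{2\pi}{n} \cos \phi_j$ are chosen precisely so that $n_j/t$ matches $A_j/(4\pi)$ up to second-order curvature terms; hence
\[
{\rm Dis}_{\normalvec{u}(\psi)}(P_n) \;=\; \sup_h \envert{\, S(\psi, h) + \frac{1}{t}\sum_j e_j(\psi, h)\,},
\]
where $S(\psi, h)$ is a smooth ``geometric mismatch'' term. At $\psi = 0$ every $\lambda_j \in \cbr{0,1}$, so all $e_j$ vanish and only $S$ contributes, recovering the $O(t^{-1/2})$ bound of Theorem \ref{thm:polar coordinates North Pole dis}.

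For $\psi > 0$, tilting the cap averages the 1-dimensional mass concentration across orbits and should strictly \emph{decrease} $\sup_h \envert{S(\psi, h)}$, while $\sum_j e_j / t$ is a sum of one-dimensional arithmetic-progression discrepancies whose signs rotate with $j$ and should cancel across orbits by equidistribution of the endpoints. The conjecture thus reduces to showing that the smoothing gain strictly dominates the fluctuation loss. The main obstacle is that both quantities are of the same order $n/t \sim t^{-1/2}$, so this is an exact-constant comparison rather than a scaling one. I would attempt it through (i) a zonal spherical-harmonic expansion of $\mathbf{1}_{C_{\psi,h}}$ along $\normalvec{u}(\psi)$, exploiting that the latitude-orbit structure of $P_n$ annihilates low-order harmonics transverse to the $z$-axis; (ii) a monotonicity/variational argument computing $\partial_\psi {\rm Dis}_{\normalvec{u}(\psi)}(P_n)$ at $\psi = 0$ and showing the one-sided derivative is nonpositive; or (iii) as a fallback, noting that the supremum in $h$ is attained at finitely many critical caps through at least three points of $P_n$, so the conjecture is a finite check for each $n$, which combined with an asymptotic Koksma–Hlawka bound on each orbit for large $n$ would yield the full conclusion. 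Given that the authors state this as a conjecture supported only by experimental evidence via the \alg algorithm, I expect step (i) or (ii) to require substantial new input beyond the per-orbit decomposition sketched here.
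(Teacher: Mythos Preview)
The paper does not prove this statement: it is stated as a conjecture. What the paper actually establishes is (a) Lemma~\ref{lem: north pole confidence radius}, that the North Pole is a \emph{local} maximum of the directed Discrepancy, via a geometric argument about planes intersecting at most one orbit, and (b) computational verification of the global statement for $2 \le n \le 125$ using the naive search and the \alg algorithm. There is no analytic proof in the paper to compare your proposal against.

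Your sketch is a reasonable line of attack, and you correctly diagnose the core difficulty: the smooth mismatch $S(\psi,h)$ and the fluctuation $\tfrac{1}{t}\sum_j e_j$ are both of order $t^{-1/2}$, so this is a constant-level comparison rather than an asymptotic one. That is exactly why the paper leaves it as a conjecture. However, there is a genuine gap in your reduction step. The set $P_n$ is \emph{not} invariant under rotation about the $z$-axis, even approximately: different orbits carry different numbers $n_j$ of points with no common divisor, so there is no nontrivial rotation fixing $P_n$. The paper itself exploits only the reflection across the $xy$-plane, which reduces the range of $\theta$ to $[0,\pi]$, not to a single value. Your reduction to the one-parameter family $\normalvec{u}(\psi) = (\sin\psi, 0, \cos\psi)$ therefore discards the azimuthal dependence without justification; the directed Discrepancy genuinely depends on $\theta$ (this is in fact why the twisted variant in Definition~\ref{def:twisted polar coordinates} changes the behaviour near $\normalvec{v}=(0,1,0)$ without affecting the North Pole value).

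Your option (iii), reducing to a finite check for each $n$, is essentially what the paper does computationally, and your option (ii), a variational argument at $\psi=0$, is morally the content of Lemma~\ref{lem: north pole confidence radius}. A full proof along the lines of (i) would, as you say, require substantial new input: one would need to control the two-dimensional supremum over $(\theta,\psi)$ and show that the per-orbit equidistribution errors $e_j$ cancel uniformly in $\theta$, which the paper makes no attempt to do.
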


We can prove the North Pole is, at least, a local maxima for the directed Discrepancy. This is what the following Lemma is concerned with.

\begin{lem} \label{lem: north pole confidence radius}
For Polar Coordinates set $P_n$, there is a positive radius $r$ such that the North Pole has the maximum directed Discrepancy in its neighborhood of distance $r$.
\end{lem}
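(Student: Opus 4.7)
The plan is to exploit the layered structure of the Polar Coordinates set: all points of $P_n$ have $z$-coordinates equal to one of the $n+1$ values $\{\sin\phi_j\}_{j=0}^{n}$ (with $\phi_0,\phi_n = \mp\pi/2$ the poles). Consequently, at the direction $\normalvec{v}_{\rm NP} = (0,0,1)$ the signed discrepancy
\[
D_{\rm NP}(h) \;:=\; \frac{\left|\{p \in P_n : \langle p,\normalvec{v}_{\rm NP}\rangle \geq h\}\right|}{t} - \frac{1-h}{2}
\]
is piecewise linear with constant slope $+\tfrac{1}{2}$ between consecutive orbits and downward jumps of size $n_j/t$ at each height $h = \sin\phi_j$. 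In particular, $d^* := {\rm Dis}_{\normalvec{v}_{\rm NP}}(P_n) = \sup_h |D_{\rm NP}(h)|$ is attained at some orbit boundary, say $h = \sin\phi_{j^*}$, from either the left (overcount) or the right (undercount).

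Now fix $\alpha > 0$ smaller than half the minimum inter-orbit gap and pick any $\normalvec{u}$ at angular distance at most $\alpha$ from $\normalvec{v}_{\rm NP}$. The projections of orbit-$j$ points onto $\normalvec{u}$ are confined to the narrow interval $I_j := [\sin(\phi_j-\alpha),\sin(\phi_j+\alpha)]$, and the $I_j$ are pairwise disjoint. Outside $\bigcup_j I_j$ one has $D_{\normalvec{u}}(h) \equiv D_{\rm NP}(h)$, so $|D_{\normalvec{u}}(h)| \leq d^*$ there automatically. Inside each $I_j$ the coarse drop of $n_j/t$ is replaced by $n_j$ individual drops of $1/t$ at the sorted projection values $h_1 < \cdots < h_{n_j}$, and $D_{\normalvec{u}}$ takes a sawtooth shape. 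A direct calculation at the local maximum just before the $i$-th drop yields
\[
D_{\normalvec{u}}(h_i^-) \;=\; \frac{S_j - (i-1)}{t} - \frac{1-h_i}{2},
\]
with a mirror formula at $h_i^+$ governing the undercount side.

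For every orbit $j$ other than $j^*$, the strict inequality $|D_{\rm NP}(\sin\phi_j^\pm)| < d^*$ leaves a positive slack that absorbs the $O(\alpha\cos\phi_j)$ tilt-induced perturbation once $\alpha$ is taken small enough. The main obstacle, and the heart of the argument, is the extremal orbit $j^*$ where no such slack exists: one must verify $h_i - \sin\phi_{j^*} \leq 2(i-1)/t$ for all $i \leq n_{j^*}$ (and the mirror inequality for the undercount side). Using that orbit $j^*$ has equispaced azimuths, so $h_i = \sin\phi_{j^*}\cos\alpha + \cos\phi_{j^*}\sin\alpha \cdot c_i$ where $c_1 \leq \cdots \leq c_{n_{j^*}}$ are the sorted cosines of equispaced angles $\theta_k - \beta$, the inequality is automatic whenever $c_i \leq 0$ (roughly half of the indices) and forces $\alpha$ to be at most a constant multiple of $1/t$ for the remaining indices. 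Taking $r$ to be the minimum of the resulting $\alpha$-thresholds over all orbits and both signs yields the positive radius claimed.
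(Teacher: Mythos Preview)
Your approach is genuinely different from the paper's. The paper argues geometrically: it picks $r$ small enough that every plane with normal within distance $r$ of the North Pole can meet at most one orbit \emph{circle} of $P_n$ (this uses only the inter-orbit gaps, not the fine structure within an orbit), and then observes that any such tilted cap can be ``rotated'' into a horizontal cap of at least the same discrepancy, because its boundary separates at most two points of $P_n$ and one can pivot the cutting plane while keeping the point count constant and driving the area in whichever direction increases $\lvert m/t - A/(4\pi)\rvert$. Your route is instead a direct analytic comparison of the piecewise-linear profiles $D_{\normalvec{u}}(h)$ and $D_{\rm NP}(h)$, treating each orbit's smeared jump explicitly. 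This buys you an explicit, quantitative estimate for the admissible radius; the paper's argument is shorter and never needs to look inside an orbit.

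There is, however, a real gap at your extremal-orbit step. For $i=1$ your inequality reads $h_1 \le \sin\phi_{j^*}$, that is,
\[
\sin\phi_{j^*}(\cos\alpha-1)+\cos\phi_{j^*}\sin\alpha\cdot c_1 \;\le\; 0.
\]
Declaring this ``automatic whenever $c_i \le 0$'' ignores the second-order term $\sin\phi_{j^*}(\cos\alpha-1)$, which is \emph{positive} when $\phi_{j^*}<0$; and it silently presupposes $c_1 \le 0$, which would fail outright if $n_{j^*}=1$ (then $c_1$ can equal $+1$ for a suitable azimuth $\beta$, forcing $h_1 > \sin\phi_{j^*}$ for every $\alpha>0$ and leaving no positive radius at all). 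The mirror inequality at $i=n_{j^*}$ on the undercount side has the symmetric issue with $c_{n_{j^*}}$. You must rule these cases out. For Polar Coordinates this can in fact be done: one checks $n_j \ge 3$ for every orbit and every $n\ge 2$ (since $\sqrt{3}\,n\sin(\pi/n) \ge 2\sqrt{3} > 5/2$), whence $c_1 \le -\cos(\pi/n_{j^*}) \le -\tfrac12$ uniformly in $\beta$, and this strictly negative first-order contribution dominates the $O(\alpha^2)$ term for all sufficiently small $\alpha$. But this verification is essential rather than cosmetic: without it the $i=1$ constraint at orbit $j^*$ can be violated for arbitrarily small $\alpha$, and the entire comparison there collapses.
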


\begin{proof}
The Lemma follows from the fact that if a normal vector $\normalvec{v}$ is closer to the North Pole than a certain distance, a plane perpendicular to $\normalvec{v}$ cannot intersect with more than one orbit of $P_n$.  To find such a radius, one needs to take the minimum over the distances of the normal vectors of the planes that connect the edges of two consecutive orbits of $P_n$, as figure \ref{fig: North Pole confidence radiud} depicts.  More precisely, for orbits with equations $ z = z_1, z = z_2, \quad 0 \leq z_1 < z_2 $, one needs to consider the diagonal plane of the Cuboid with a face circumscribed the orbit at $z = z_1$ and a parallel face passing through $z  = z_2$.  If the normal vector of a plane $Pl$ has a lower distance to the North Pole than the normal vector of this diagonal plane, then $Pl$ cannot intersect both of the above orbits.

Since we have finitely many normal vectors for consecutive orbits, the minimum of their distance to the North Pole would be a positive number $r$.

If a cap is cut by a plane that intersects with at most one orbit, and hence the plane passes through at most two points of $P_n$, one can rotate the plane such that the number of points from $P_n$ remains constant on the cap while the cap area changes in a direction that increases the Discrepancy (One is able to both increase and decrease the cap area).  Hence, he can find a cap perpendicular to the North Pole with directed Discrepancy equal to or greater than the original cap.  This concludes the proof.

\begin{figure}[h] 
    \centering
    \includegraphics[scale = .75]{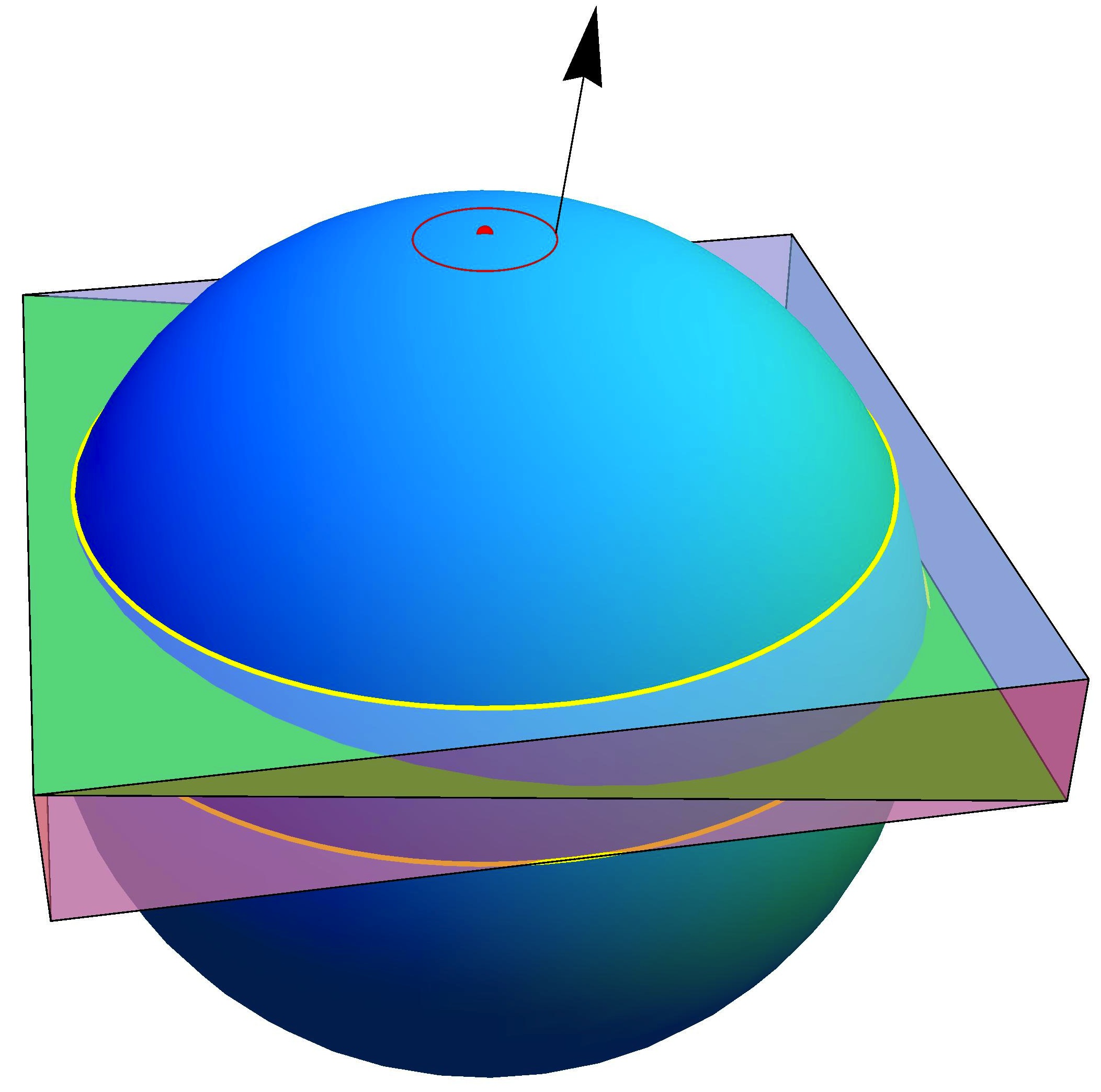}
    \caption{Normal vector of a plane connecting two consecutive orbits }
   \label{fig: North Pole confidence radiud}
\end{figure}
\end{proof}

While Conjecture \ref{conj:North Pole has the max Discrepancy} seems to be true for the original Polar Coordinates distribution, we found the \alg runs easier if we add a different shift at each orbit.  Note that, shifting an orbit does not change the Discrepancy in the direction of the North Pole, but it provides us with a larger confidence radius around $\normalvec{v} = \intoo{0, 1, 0}$. The reason is the original Polar Coordinates puts many points on a plane that are orthogonal to $\intoo{0, 1, 0}$ which can make the confidence radius in this direction so small.  We call this shifted version of the distribution {Twisted Polar Coordinates}.

\begin{defn} \label{def:twisted polar coordinates}
Twisted Polar Coordinates has exactly the same procedure to distribute points on the Sphere as Polar Coordinates defined in subsection \ref{subsec:polar coordinates}.  It has a different shift at each orbit with is given by
\begin{equation}
\text{shift}_j = \frac{j}{n} \frac{2 \pi }{n_j}, \quad 1 \leq j < \lfloor \frac{n_j}{2} \rfloor.
\end{equation}
We also shift the orbits below the plane $z = 0$ such that we keep symmetry with respect to this plane.  Figure \ref{fig:twisted pc} compares Polar Coordinates with Twisted Polar Coordinates for $n = 14$. 
\end{defn}

\begin{figure}[h]
    \centering
    \includegraphics[scale = .75]{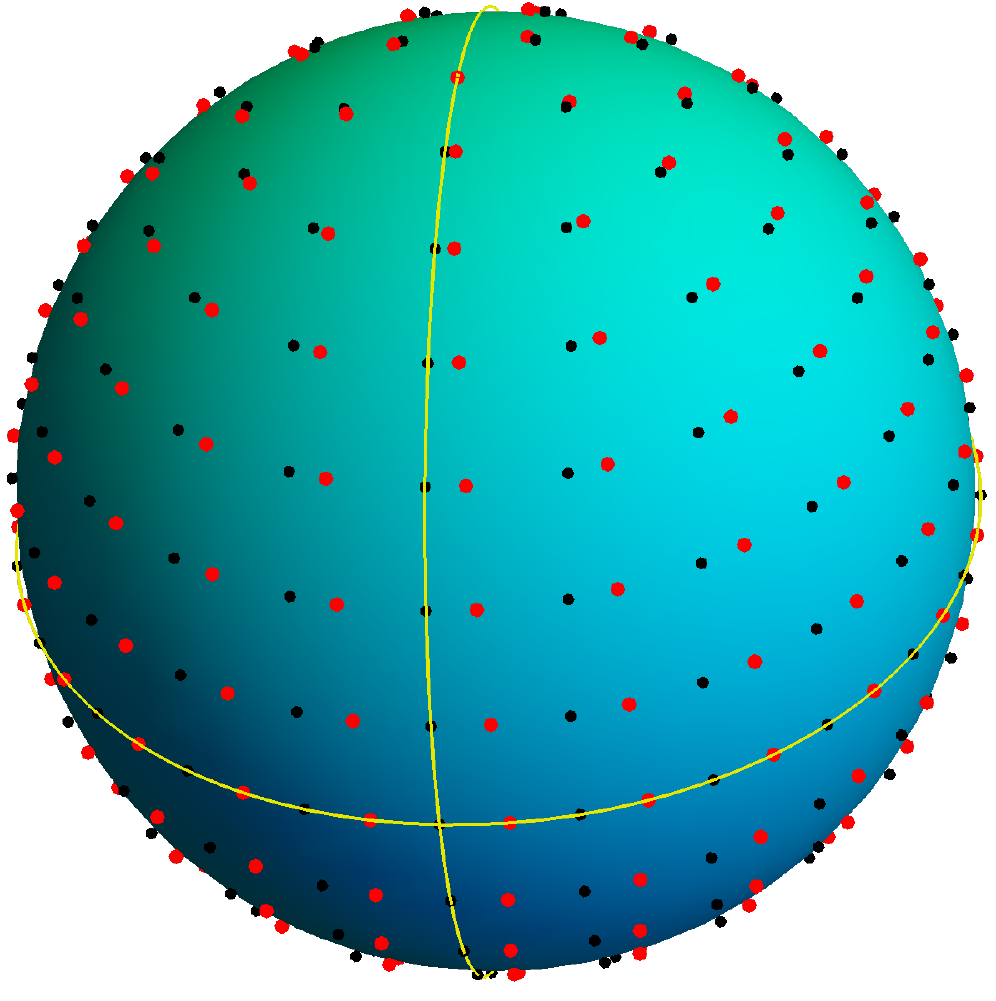}
    \caption{Polar Coordinates (small black points) vs. Twisted Polar Coordinates (large red points) for $n = 14$.}
    \label{fig:twisted pc}
\end{figure}

\begin{rem}
Note that Polar Coordinates and Twisted Polar Coordinates have the same directed Discrepancy at the North Pole.
\end{rem}

\subsection{Supporting evidence}

We have used both the naive search, when there are a few points on the Sphere, and the Directional Discrepancy algorithm to testify Conjecture \ref{conj:North Pole has the max Discrepancy}.

For $2 \leq n \leq 15$, we could run the naive method on $P_n$ and explicitly check the North Pole has the highest Discrepancy.  For $15 \leq n \leq 125$, we took advantage of the Directional Discrepancy algorithm in the following way.
By Lemma \ref{lem: north pole confidence radius} we could find $\phi_{\max}$ such that for any direction $\normalvec{v}$, with $\phi \geq \phi_{\max}$ in polar coordinates, the directed Discrepancy at $\normalvec{v}$ is less than the one in the direction of the North Pole.  Hence, we just need to check the directed Discrepancy for the rest of the directions.  Note that, by definition, the directed Discrepancy is the same in directions $\normalvec{v}$ and $ - \normalvec{v}$ for any given direction $\normalvec{v}$.  Thus, it suffices to consider directions on the upper semi-Sphere, i.e. for $z \geq 0$.  Furthermore, since the set $P_n$ is symmetric to the $xy$ plane, directed  Discrepancy at directions $(x, y, z), (- x, -y, z)$ and $(-x, -y, -z)$ are all the same, therefore it suffices to consider directions with $0 \leq \theta \leq \pi$ in the polar coordinates.

Given all the above, our strategy to check Conjecture \ref{conj:North Pole has the max Discrepancy} is as follows.  For $P_n$, first, we find $\phi_{\max}$ based on Lemma \ref{lem: north pole confidence radius}.  Then, we set $d$ to be the directed Discrepancy at the North Pole and run the Directional Discrepancy algorithm to check if the directed Discrepancy in the region $0 \leq \phi \leq \phi_{\max}, \quad 0 \leq \theta \leq \pi$ is less than $d$.  The affirmative answer to this question rigorously proves the Conjecture \ref{conj:North Pole has the max Discrepancy} for tested cases.  This was the case for all $15 \leq n \leq 125$
for which we run the Directional Discrepancy. Figure \ref{fig: coverage for n = 60} shows the directions and their confidence radii with which the Directional Discrepancy covers the area of interest for $P_{60}$.  In this picture, North Pole has been shown by a red vector and the boundary of the region at which we proved it is local maxima is depicted by a yellow circle.

\begin{figure}[h]
    \centering
    \includegraphics{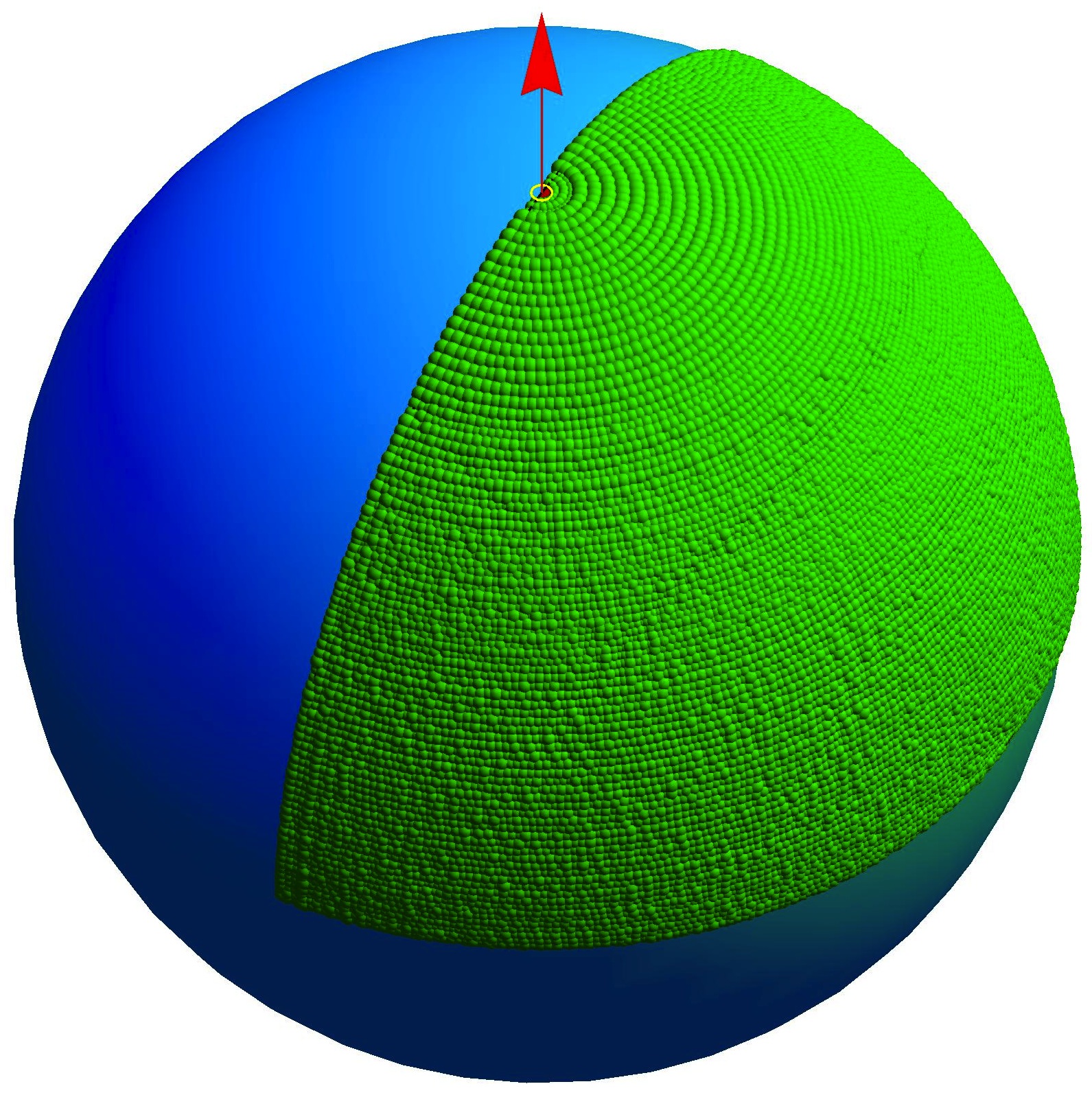}
    \caption{Coverage of the required directions with confidence radii for $n = 60$}
    \label{fig: coverage for n = 60}
\end{figure}

The data of this run is available at the following link: 

\href{https://github.com/minalatifi/DirectionalDiscrepancy/tree/main/Results}{https://github.com/minalatifi/DirectionalDiscrepancy/tree/main/Results}. We believe Conjecture \ref{conj:North Pole has the max Discrepancy} holds for higher $n$ as well and can be checked with the Directional Discrepancy algorithm.  However, limited computational power makes us run it only for $n \leq 125$.  Note that in the last case, the Polar Coordinates distribution puts $17234$ points on the Sphere, and the \alg algorithm uses $ 30001$ directions to cover the quarter Sphere.  The interested reader may take run for higher values of $n$ using our package at:

\href{https://github.com/minalatifi/DirectionalDiscrepancy}{https://github.com/minalatifi/DirectionalDiscrepancy}.

\subsection{Run Time}
In this section we would like to discuss the timing data of the run made to prove Conjecture \ref{conj:North Pole has the max Discrepancy}. In order to analyze the running time of our algorithm, all the relevant data has been collected from reports of running the Directional Discrepancy algorithm on Polar coordinates with parameter $n$ between 15 to 125. The run times are measured on a laptop computer with a 2.7 GHz Intel core i7 processor and 8 GB RAM.

Table \ref{tab:result} denotes the value of key variables for some of the runs as well as the run time for each phase of the \alg in minutes.  We will discuss the full data through some plots in the the rest of this section.

\begin{table}[]
    \centering
    \caption{Results of the Directional Discrepancy algorithm for $15 \leq n \leq 36$ and $108 \leq n \leq 125$.  $n$, $t$, $n_{DD}$, and $n_{CC}$ are respectively the index number for Polar Coordinates, number of points, number of directions, and number of directions need Cover Cap.} 
    \begin{tabular}[c]{|p{1.5 cm} | p{1.5 cm} | p{1.5 cm} | p{1.5 cm} | p{1.5 cm} | p{1.7 cm} | p{1.5 cm} | c}
         \hline $n$ &
         $t$ &
         $n_{DD}$ &
         $n_{CC}$ &
         phase 1 run-time (min) &
         Cover Cap run-time (min) &
         total run-time (min)
         \\ \hline15 & 250 & 3968 & 603 & 1.38 & 1.08 & 2.46
         \\ \hline16 & 284 & 3904 & 651 & 0.96 & 1.09 & 2.04
         \\ \hline17 & 318 & 4023 & 629 & 0.6 & 0.61 & 1.21
         \\ \hline18 & 359 & 3989 & 527 & 0.59 & 0.5 & 1.09
         \\ \hline19 & 400 & 3498 & 387 & 0.53 & 0.66 & 1.18
         \\ \hline20 & 441 & 4427 & 460 & 0.65 & 0.41 & 1.06
         \\ \hline21 & 486 & 4212 & 490 & 0.72 & 0.46 & 1.18
         \\ \hline22 & 538 & 4437 & 436 & 1.07 & 0.48 & 1.55
         \\ \hline23 & 586 & 5044 & 303 & 1.06 & 0.33 & 1.38
         \\ \hline24 & 634 & 4496 & 421 & 0.82 & 0.46 & 1.28
         \\ \hline25 & 690 & 5165 & 418 & 1.13 & 0.48 & 1.6
         \\ \hline26 & 749 & 4732 & 343 & 0.94 & 0.36 & 1.3
         \\ \hline27 & 807 & 4305 & 281 & 1.27 & 0.38 & 1.65
         \\ \hline28 & 864 & 5064 & 278 & 1.25 & 0.32 & 1.57
         \\ \hline29 & 928 & 5318 & 289 & 1.42 & 0.31 & 1.74
         \\ \hline30 & 994 & 5415 & 304 & 1.46 & 0.44 & 1.91
         \\ \hline31 & 1060 & 5066 & 435 & 1.64 & 0.54 & 2.18
         \\ \hline32 & 1127 & 5980 & 397 & 1.58 & 0.57 & 2.15
         \\ \hline33 & 1202 & 6194 & 421 & 1.86 & 0.56 & 2.42
         \\ \hline34 & 1277 & 6369 & 344 & 1.74 & 0.45 & 2.19
         \\ \hline35 & 1352 & 5984 & 315 & 1.98 & 0.42 & 2.41
         \\ \hline36 & 1428 & 6067 & 413 & 1.79 & 0.6 & 2.39
         \\ \hline108 & 12861 & 24578 & 1421 & 44.07 & 11.67 & 55.74
         \\ \hline109 & 13102 & 24245 & 1542 & 43.02 & 12.08 & 55.1
         \\ \hline110 & 13341 & 25006 & 1362 & 44.14 & 11.39 & 55.54
         \\ \hline111 & 13584 & 25835 & 1408 & 47.51 & 11.67 & 59.18
         \\ \hline112 & 13826 & 26206 & 1348 & 47.09 & 11.44 & 58.53
         \\ \hline113 & 14074 & 25911 & 1211 & 48.56 & 11.61 & 60.17
         \\ \hline114 & 14329 & 27120 & 1269 & 51.22 & 11.05 & 62.27
         \\ \hline115 & 14586 & 26871 & 1548 & 51.76 & 12.64 & 64.4
         \\ \hline116 & 14837 & 27795 & 1268 & 48.95 & 10.12 & 59.07
         \\ \hline117 & 15085 & 27479 & 1346 & 52.36 & 11.47 & 63.83
         \\ \hline118 & 15348 & 28819 & 1296 & 54.49 & 12.63 & 67.12
         \\ \hline119 & 15618 & 28796 & 1547 & 62.3 & 15.39 & 77.69
         \\ \hline120 & 15880 & 28223 & 1768 & 58.53 & 16.76 & 75.29
         \\ \hline121 & 16148 & 29180 & 1499 & 61.36 & 14.62 & 75.98
         \\ \hline122 & 16413 & 29258 & 1396 & 62.63 & 13.51 & 76.14
         \\ \hline123 & 16694 & 30214 & 1391 & 63.85 & 14.09 & 77.94
         \\ \hline124 & 16949 & 29746 & 1444 & 64.06 & 14.51 & 78.57
         \\ \hline125 & 17234 & 30001 & 1776 & 66.44 & 18.35 & 84.8 
         \\ \hline
    \end{tabular}
    \label{tab:result}
\end{table}

Regression is the key method that has been used in this study to find relationships between different parameters. We either fit a regression line to study the linear relation between two parameters, or plot the logarithm of both dependent and independent parameters and fit a line to it.  The latter has been used in order to capture the growth rate of the dependent variable.  In this case, if we denote the slope of the line, dependent, and independent parameters by $m$, $y$, and $x$ respectively, then we have $\log y \simeq m \log x + c$, hence $y$ grows approximately like $x^m$.

Recall that as Theorem \ref{thm:time complexity} claims, the run time of the Directional Discrepancy algorithm is $O(\frac{t}{r_{\min}^2})$. Firstly, we investigate the empirical relation between $t$ and $r_{\min}$. In Figure \ref{fig:plot_r_min_and_t}, the vertical axis denotes the values of $\frac{1}{r_{min}^2}$ and the horizontal axis shows the values of $t$ in different runs.  As Figure \ref{fig:plot_r_min_and_t} displays, there is almost a linear relation between $t$ and $\frac{1}{t_{\min}^2}$, i.e. $\frac{1}{r_{min}^2} = O(t)$. As a result, we expect the total running time to be $O(t^{2})$.  This quadratic relation can be seen in Figure \ref{fig:time_t}.  While the second order coefficient is very small in the quadratic function fitted to the data, we expect the quadratic growth be more observable for larger values of $n$.

\begin{figure}[h!]
    \centering
    \includegraphics[scale=0.5]{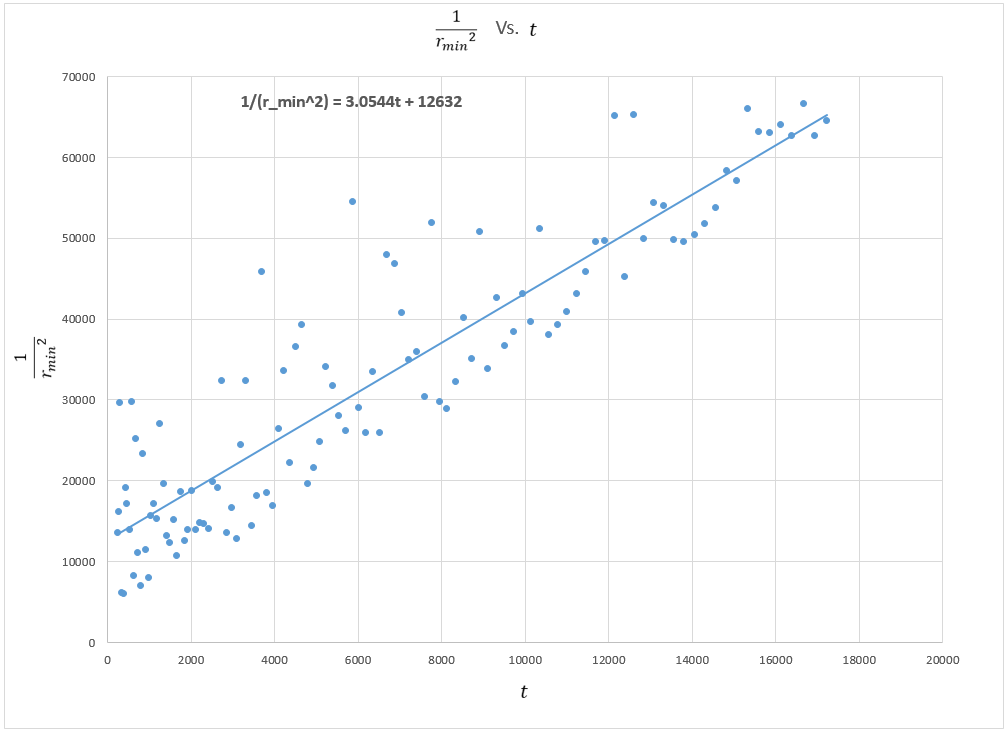}
    \caption{Relation between $r_{min}$ and $t$ for $15\leq n \leq 125$}
    \label{fig:plot_r_min_and_t}
\end{figure}

\begin{figure}[h!]
    \centering
    \includegraphics[scale=0.5]{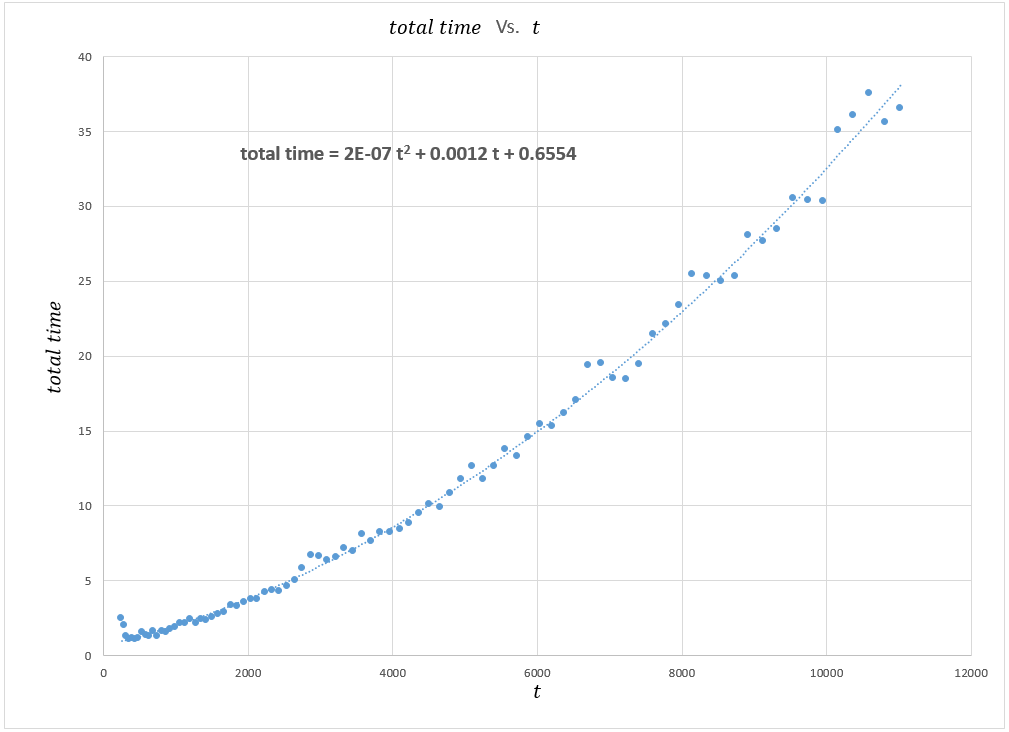}
    \caption{Total time and $t$ for $15 \leq n \leq 125$}
    \label{fig:time_t}
\end{figure}

\begin{figure}[h!]
    \centering
    \includegraphics[scale=0.5]{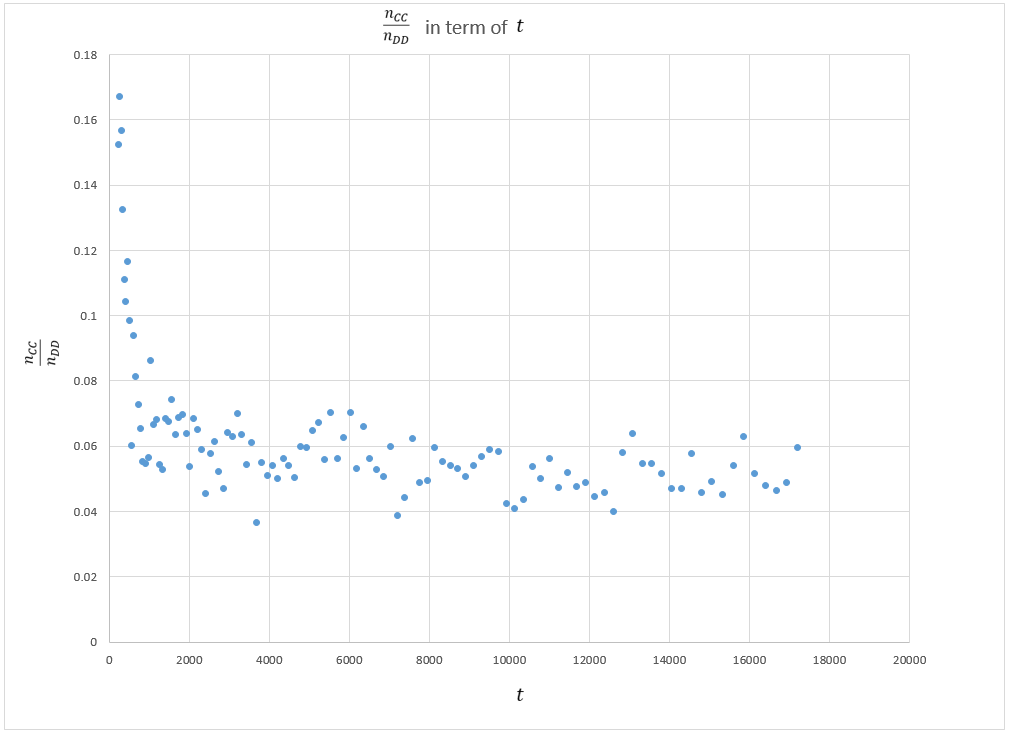}
    \caption{The ratio of $n_{CC}$ to $n_{DD}$ for $15\leq n \leq 125$}
    \label{fig:plot_n_CC_and_n_DD}
\end{figure}

Secondly, note that the total number of directions that the \alg uses to prove the upper bound, $n_{DD}$,  was a key variable in the proof of Theorem \ref{thm:time complexity}. Figure \ref{fig: n_DD_r_min} confirms that $n_{DD}$ is proportional to $\frac{1}{r_{min}^{2}}$ as claimed and proved in Theorem \ref{thm:time complexity}.  

\begin{figure}[h!]
    \centering
    \includegraphics[scale=0.5]{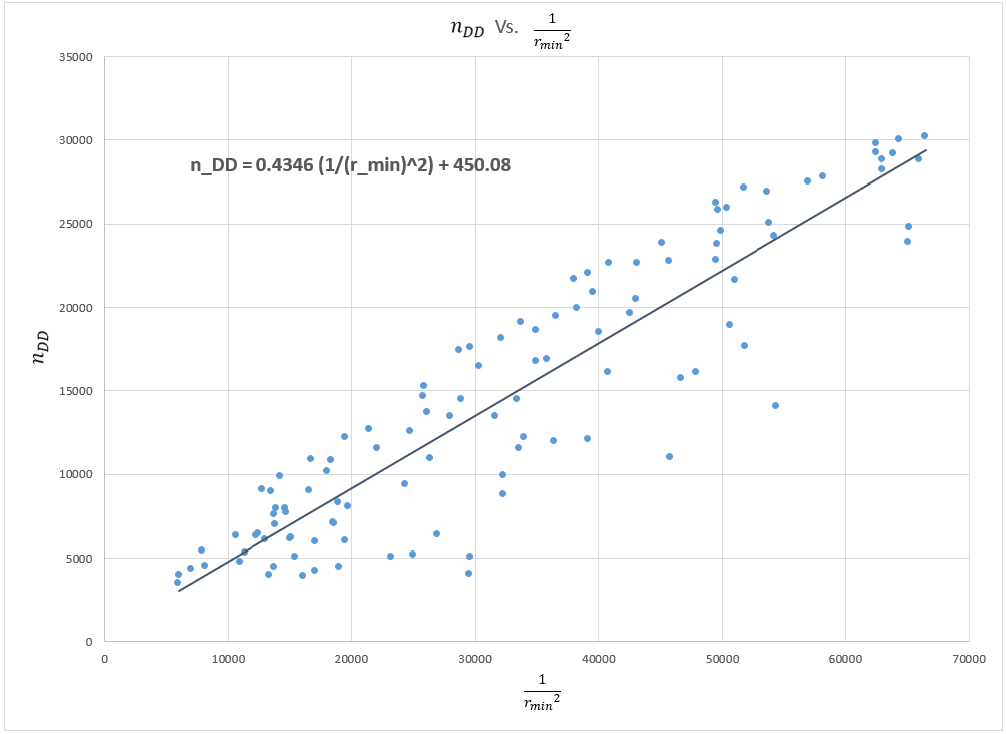}
    \caption{Linear relation between $n_{DD}$ and $\frac{1}{r_{min}^{2}}$ for $15\leq n \leq 125$}
    \label{fig: n_DD_r_min}
\end{figure} 

The time complexity of Cover Cap was discussed in Remark \ref{rem: cover cap time complexity}.  Figure \ref{fig:plot_n_CC_and_n_DD} aims to show the portion of the directions that needed to be covered by Cover Cap.  According to this Figure, for large values of $n$, almost 6 percent of directions require to be covered by Cover Cap.  Hence, the run time for Cover Cap has the same order as the run time of the phase 1, yet its value is notably smaller for each run.  Indeed, this can be observed from the raw data, reported in Table \ref{tab:result}, as well.

\begin{figure}[h!]
    \centering
    \includegraphics[scale=0.5]{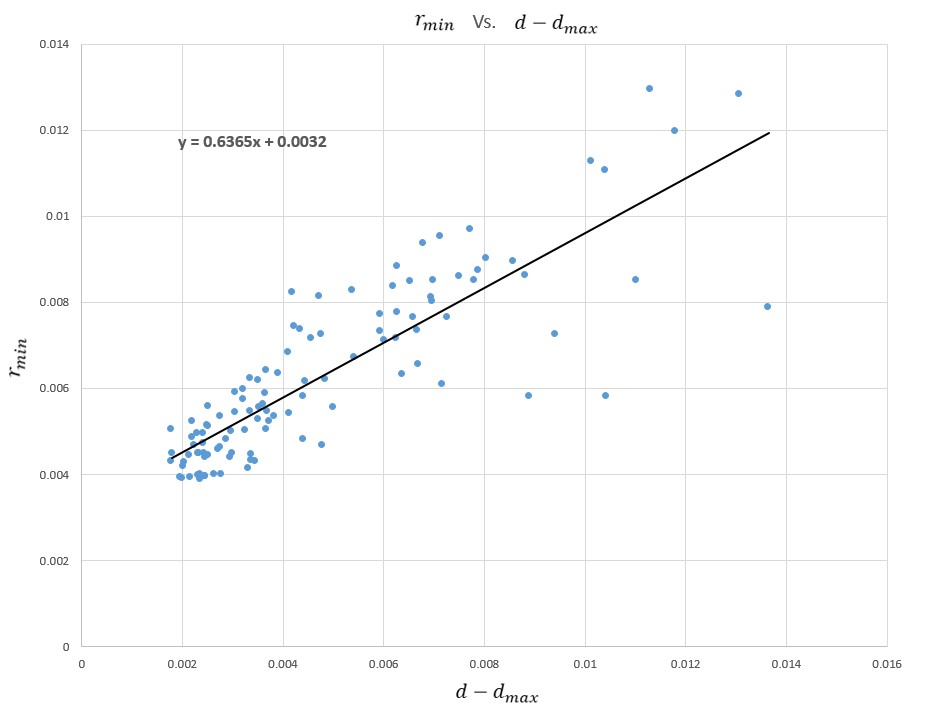}
    \caption{$r_{min}$ and $d - d_{\max}$ for $15\leq n \leq 125$}
    \label{fig:plot_r_min_d_d_max}
\end{figure}
\begin{figure}[h!]
    \centering
    \includegraphics[scale=0.5]{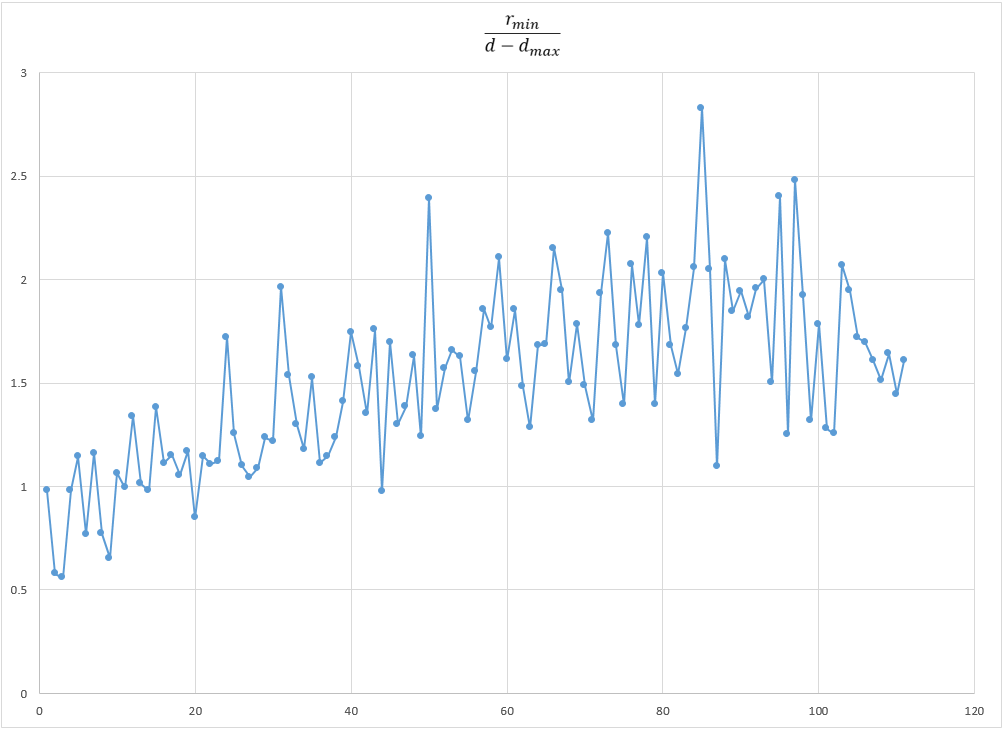}
    \caption{the ratio of $r_{\min}$ to $d - d_{\max}$ for $15\leq n \leq 125$}
    \label{fig: ratio_r_min_d_d_max}
\end{figure}

Moreover, Figure \ref{fig:plot_r_min_d_d_max} has been provided in order to support the assumption made in Remark \ref{rem:r <= 2/t}.
The positive slope of the regression line in Figure \ref{fig:plot_r_min_d_d_max} confirms that $r_{\min} = \Omega(d - d_{\max})$, which was assumed  in Remark \ref{rem:r <= 2/t} without a rigorous proof. Another helpful plot to see this has shown in Figure \ref{fig: ratio_r_min_d_d_max}.  This Figure denotes the ratio $\frac{r_{\min}}{d - d_{\max}}$ fluctuates around $1.5$ over a wide range of $n$, and hence, is not vanishing.




\section{Conclusion} \label{sec: conclusion}
We introduced the notion of directed Discrepancy and proved it is upper semi continuous in some sense.  This property led us to design an approximation algorithm, Directional Discrepancy, which can offer useful information about the Discrepancy of a given distribution in a timely manner.  We showed how this tool can offer informative data about Polar Coordinates distribution and we believe it can be used, in a similar manner, to make insightful investigations for further studies.

\clearpage
\bibliography{References.bib}
\bibliographystyle{plain}

\end{document}